\newcommand{\ignore}[1]{}
\newlength{\proofpostskipamount}
\newlength{\proofpreskipamount}
\newlength{\proofnegpostskip}
\newcommand{\REMOVE}[1]{}
\newtheorem{rep@theorem}{\rep@title}{\bfseries}{\rmfamily}
\newcommand{\newreptheorem}[2]{%
\newenvironment{rep#1}[1]{%
 \def\rep@title{#2 \ref{##1}}%
 \begin{rep@theorem}}%
 {\end{rep@theorem}}}
\newtheorem{theorem}{Theorem}
\newtheorem{lemma}{Lemma}
\newtheorem{remark}{Remark}
\newcommand{\sset}[1]{\{ #1 \}}
\newcommand{\set}[2]{\{ #1 ; #2 \}}
\newcommand{\assign}{\mathbin{\raisebox{0.05ex}{\mbox{\rm :}}\!\!=}}
\newcommand{\PoA}{\mathit{PoA}}\newcommand{\poa}{\PoA} 
\newcommand{\npoa}{\mathit{ePoA}}\newcommand{\ePoA}{\mathit{ePoA}}
\newcommand{\CN}{C_N}
\newcommand{\Copt}{C_{\mathit{opt}}}
\newcommand{\hl}{\hat{\ell}}
\newcommand{\CM}{\hat{C}_N}
\newcommand{\CMN}{C_{\mathit{MN}}}
\newcommand{\CNM}{\CMN}
\date{}
\begin{document}

\title{Improving the Price of Anarchy for Selfish Routing via Coordination
Mechanisms}

\author{ 
	George Christodoulou\thanks{University of Liverpool, United Kingdom.
	\texttt{gchristo@liv.ac.uk}}
	\and
	Kurt Mehlhorn\thanks{ 
	Max-Planck-Institut f\"ur Informatik, Saarbr\"ucken, Germany.
	\texttt{mehlhorn@mpi-inf.mpg.de}}
	\and
	Evangelia Pyrga\thanks{Technische Universit\"at M\"unchen, Germany.
	\texttt{pyrga@in.tum.de}}
	}

\ignore{
\author{ 
	Giorgos Christodoulou\inst{1}
	\and Kurt Mehlhorn\inst{2}
	\and Evangelia Pyrga\inst{3} 
	}

\institute{University of Liverpool, United Kingdom.
        \\ \email{gchristo@liv.ac.uk}\and 
Max-Planck-Institut f\"ur Informatik, Saarbr\"ucken, Germany.
       \\ \email{mehlhorn@mpi-inf.mpg.de}
\and Technische Universit\"at M\"unchen, Germany.
       \\ \email{pyrga@in.tum.de} 
}
}

\maketitle

\begin{abstract} 
We reconsider the well-studied Selfish Routing game with 
affine latency functions. The Price of Anarchy for this class of 
games takes maximum value 4/3; this maximum is 
attained already  for a simple 
network of two parallel links, known as Pigou's network. We improve upon 
the value 4/3 by means of Coordination Mechanisms. 
 
We increase the latency functions of the edges in the network, i.e., if 
$\ell_e(x)$ is the latency function of an edge $e$, we replace it by 
$\hat{\ell}_e(x)$ with $\ell_e(x) \le \hat{\ell}_e(x)$ for all $x$. 
Then an adversary fixes a 
demand rate as input. The \emph{engineered Price of Anarchy} of the 
mechanism is defined as the worst-case ratio of the Nash social cost in the modified 
network over the optimal social cost in the original network. Formally, if 
$\CM(r)$ denotes the cost of the worst Nash flow in the modified network for 
rate $r$ and $\Copt(r)$ denotes the cost of the optimal flow in the original 
network for the same rate then 
\[   \ePoA   = \max_{r \ge 0} \frac{\CM(r)}{\Copt(r)}.\] 
  
We first exhibit a simple coordination mechanism that achieves for any network of parallel 
links an engineered Price of Anarchy strictly less than 
4/3. For the case of two parallel links 
our basic mechanism gives 5/4 = 1.25. Then, for the case of two parallel links, we 
describe an {\em optimal} mechanism; its engineered Price of Anarchy lies between 1.191 
and 1.192. 
\end{abstract}

\section{Introduction} 
\label{sec: introduction}

We reconsider the well-studied Selfish Routing game with affine cost
functions and ask whether increasing the cost functions can reduce the cost of
a Nash flow. In other words, the increased cost functions should induce a user
behavior that reduces cost despite the fact that the cost is now determined by increased
cost functions. We answer the question positively in the following sense. The
Price of Anarchy, defined as the maximum ratio of Nash cost to optimal cost, is
4/3 for this class of games. We show that increasing costs can reduce the price
of anarchy to a value strictly below 4/3 at least for the case of networks of
parallel links. For a network of two parallel links, we reduce the price of
anarchy to a value between 1.191 and 1.192 and prove that this is optimal. In
order to state our results precisely, we need some definitions. 

We consider single-commodity congestion
games on networks, defined by a directed graph $G=(V,E)$, designated nodes
$s,t\in V$, 
and a set $\ell=(\ell_e)_{e\in E}$ of non-decreasing, non-negative functions; $\ell_e$ 
is the latency function of edge $e\in E$. 
Let $P$ be the set of all paths from $s$ to $t$, and let  $f(r)$ be  a feasible
$s,t$-flow routing $r$ units of flow.  For any $p\in P$, let $f_p(r)$ denote the
amount of flow that $f(r)$ routes via path $p$. 
For ease of notation, when $r$ is fixed and clear from context, we will write simply $f_, f_p$ instead of $f(r), f_p(r)$.
By definition, $\sum_{p\in P} f_p
= r$. Similarly, for any edge $e\in E$, let $f_e$ be the amount of flow going
through $e$. We define the 
latency of $p$ under flow $f$  as $\ell_p(f) = \sum_{e\in
p}\ell_e(f_e)$ and the cost of flow $f$ as $C(f) = \sum_{e\in E} f_e \cdot
\ell_e(f_e)$ and use $\Copt(r)$ to denote the minimum cost of any flow of rate $r$. We will refer to such a minimum cost flow as an  \emph{optimal} flow (Opt).
A feasible flow $f$ that routes $r$ units of flow from $s$ to $t$ 
is at \emph{Nash (or Wardrop~\cite{War52}) Equilibrium\footnote{This assumes
continuity and monotonicity of the latency functions. For
non-continuous functions, see the
discussion later in this section.}} if for $p_1, p_2 \in P$ with $f_{p_1}>0$, 
$\ell_{p_1}(f)\le \ell_{p_2}(f)$. We use $\CN(r)$ to denote the maximum cost of a Nash
\ignore{

We say that a feasible flow $f$ that routes $r$ units of flow from $s$ to $t$
is at \emph{Nash Equilibrium} if for $p_1, p_2 \in P$,  with $f_{p_1}>0$ and
$\delta\to 0$,  $\ell_{p_1}(f)\le \ell_{p_2}(f')$, where
\begin{equation}
\label{eq:ds-definition}
        f_p' = \left\{
                \begin{array} {ll}
                f_p - \delta, &         p = p_1 \\
                f_p + \delta, &         p = p_2 \\
                f_p             &       p \neq p_1,p_2. \\
                \end{array}
        \right.
\end{equation}
\epnote{I used our old definition from Dafermos-Sparrow, but changed $\delta$ to
being infinitesimally small. }
We use $\CN(r)$ to denote the maximum cost of a Nash}%
flow for rate $r$. The \emph{Price of Anarchy (PoA)}~\cite{KP09} (for demand $r$) is defined as 
\[ \PoA(r) =  \frac {\CN(r)} {\Copt(r)} \quad\text{and}\quad \PoA = \max_{r > 
0} \PoA(r).\]
PoA is bounded by
$4/3$ in the case of affine latency functions $\ell_e(x) = a_ex + b_e$ with $a_e \ge
0$ and $b_e \ge 0$; see~\cite{Roughgarden-Tardos,Correa-et-al}. The worst-case is already assumed for a simple network of two
parallel links, known as Pigou's network; see Figure~\ref{fig:Pigou}. 


\newcommand{\hell}{\hat{\ell}}

\begin{figure}
\begin{center}
{\small
\psfrag{l1}{$\ell_1(x) = x$}\psfrag{l2}{$\ell_2(x) = 1$}\psfrag{f1*r}{$f_1^*(r) = \min\{1/2,r\}$}\psfrag{f2*r}{$f_2^*(r) =
\max\{0,r - 1/2\}$}\psfrag{f1Nr}{$f_1^N(r) = \min\{1,r\}$}\psfrag{f2Nr}{$f_2^N(r) =
\max\{0,r - 1\}$}\psfrag{h2}{$\hell_2(x) = 1$} \psfrag{h1}{$\hell_1(x) =
\begin{cases} x & \text{for $x \le 1/2$}\\ \infty & \text{for $x > 1/2$}
\end{cases}$}
\psfrag{r}{$r$}\psfrag{PoA}{$\PoA(r)$}
	\epsfig{file=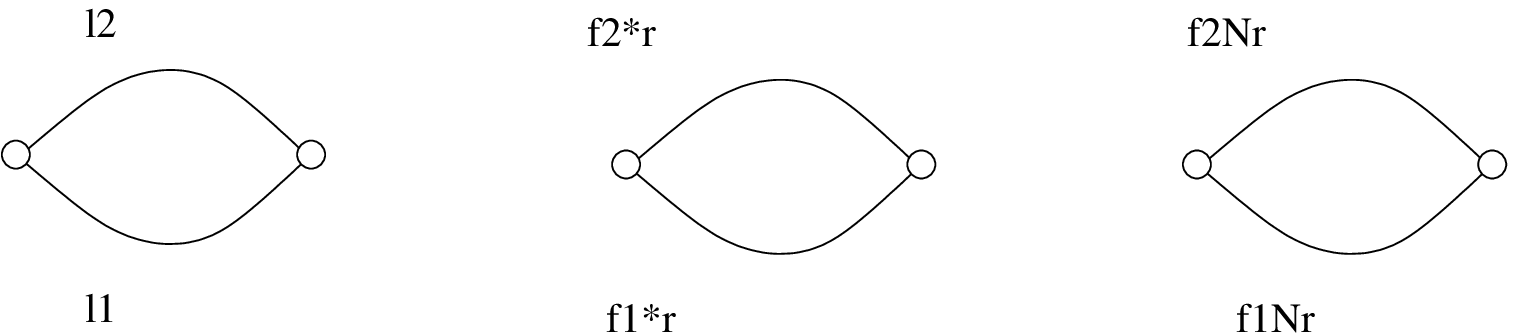, height=3cm}}

\centering{ (a)  \hspace{5cm} (b) \hspace{5cm} (c)}

\vspace{1ex}
\begin{tikzpicture}[auto, right,  node distance=0.5cm, scale=.37]
	\tikzstyle{axis}=[->,draw=black,rounded corners]
	\tikzstyle{graph}=[draw=black]
	\tikzstyle{graph-blue}=[dotted]
	
	\node (ph1) at (-1.5cm, -1cm) {};
	\node (ph2) at (7.5cm, 5.0cm) {};

	\coordinate(origin) at (0,0) {};
	\coordinate(xmax) at (7.5cm,0) {};
	\coordinate(ymax) at (0,6cm) {};

	\draw[axis] (origin.center) -- (xmax); 
	\draw[axis] (origin.center) -- (ymax);

	\node[anchor=east] (ylabel) at ([xshift=-1ex, yshift=0ex]ymax.east) {$\poa(r)$};
	\node[anchor=north] (xlabel) at ([xshift=-2ex, yshift=-1ex]xmax.south) {$r$};

	\coordinate(ropt) at (1,1) {};

	\draw[graph] (0,1) coordinate (graphbegin)  --  (ropt);  
	\draw[graph-blue] (1,0) coordinate (r0) --(ropt);	
	\node[anchor=north]  at ([yshift=-1ex]r0.south) {$\frac 1 2$};
	\node[anchor=east]  at ([xshift=-1ex]graphbegin.east) {1};

	\draw[graph] (ropt)  
			.. controls (2.1,1.3) and (2.35,1.6)..
			(2.5cm,4) coordinate (maxPoA); 
	\draw[graph-blue] (0,4) coordinate (XmaxP) --(maxPoA);	
	\draw[graph-blue] (2.5cm,0) coordinate (Px) --(maxPoA);	

	\node[anchor=east]  at ([xshift=-1ex]XmaxP.east) {4/3};
	\node[anchor=north]  at ([yshift=-1ex]Px.south) {$ 1$};

	\draw[graph] (maxPoA)  
			.. controls (3,1.6cm) and (3.9cm,1.0cm)..
			 (7,1.0cm) coordinate (limit); 

	\end{tikzpicture}
{\small
\psfrag{l1}{$\ell_1(x) = x$}\psfrag{l2}{$\ell_2(x) = 1$}\psfrag{f1*r}{$f_1^*(r) = \min\{1/2,r\}$}\psfrag{f2*r}{$f_2^*(r) =
\max\{0,r - 1/2\}$}\psfrag{f1Nr}{$f_1^N(r) = \min\{1,r\}$}\psfrag{f2Nr}{$f_2^N(r) =
\max\{0,r - 1\}$}\psfrag{h2}{$\hell_2(x) = 1$} \psfrag{h1}{$\hell_1(x) =
\begin{cases} x & \text{for $x \le 1/2$}\\ \infty & \text{for $x > 1/2$}
\end{cases}$}
\psfrag{r}{$r$}\psfrag{PoA}{$\PoA(r)$}
	\epsfig{file=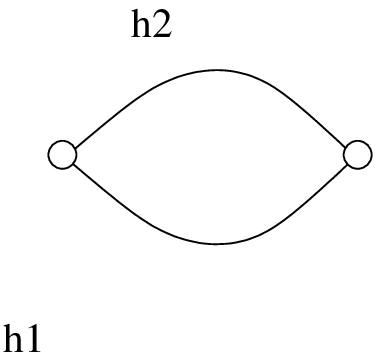, height=3cm}}
\medskip

(d) \hspace{4cm} (e)
\end{center}

\caption{Pigou's network: We show the original network in (a), the optimal flow
  in (b) and
the Nash flow in (c) as a function of the rate $r$, respectively. The Price of Anarchy
as a function of the rate is shown in (d); $PoA(r)$ is 1 for $r \le 1/2$, then starts to grow
until it reaches its maximum of $4/3$ at $r = 1$, and then decreases again and
approaches $1$ as $r$ goes to infinity. Finally, in (e) we show the 
modified latency functions. 
We obtain $\ePoA(r) = 1$ for all
$r$ in the case of Pigou's network.
\label{fig:Pigou}}
\end{figure}

A {\em Coordination Mechanism}\footnote{Technically, we consider {\em
    symmetric} coordination mechanisms in this work, as defined in
  \cite{CKN09} i.e., the latency modifications affect the users in a
  symmetric fashion.} replaces the cost functions $(\ell_e)_{e \in E}$ by
functions\footnote{One can interpret the difference $\hell_e-\ell_e$ as a
  flow-dependent toll imposed on the edge $e$.} $\hell = (\hell_e)_{e\in E}$ such that $\hell_e(x)\ge
\ell_e(x)$ for all $x\ge 0$.  Let $\hat C(f)$ be the cost of flow $f$
when for each edge $e\in E$, $\hat \ell_e$ is  used instead of $\ell_e$ and let
$\CM(r)$ be the maximum cost of a Nash flow of rate $r$ for the modified latency
functions. We define the \emph{engineered Price
of Anarchy} (for demand $r$) as 
\[  \npoa(r) = \frac {\CM(r)} {\Copt(r)}\quad\text{and}\quad \npoa = \max_{r > 0} \npoa(r).\]
We stress that the optimal cost refers to the original latency
functions $\ell$. 
\\
\\
\noindent {\bf Non-continuous Latency Functions:}
In the previous definition, as it will become clear in
Section~\ref{sec: negative result}, it is important to allow non-continuous
modified latencies. However, when we move from continuous to non-continuous latency functions, Wardrop
equilibria do not always exist. Non-continuous functions
have been studied by transport economists to model the effects of 
step-function congestion tolls and traffic lights. Several notions of equilibrium that
handle discontinuities have been proposed in the
literature\footnote{See~\cite{Pat94,MP07} for an excellent exposure of the
  relevant concepts, the relation among them, as well as for
  conditions that guarantee their existence.}. The ones
that are closer in spirit to Nash equilibria, are those proposed
by Dafermos\footnote{In~\cite{Daf71}, Dafermos weakened
  the orginal definition by \cite{DS69} to make it closer
  to the concept of Nash Equilibrium.}~\cite{Daf71} and Berstein and Smith~\cite{BS94}.  
According to the Dafermos'~\cite{Daf71} definition of {\em user optimization}, a flow is in equilibrium if no
{\em sufficiently small} fraction of the users on any path, can decrease the latency
they experience by switching to another path\footnote{See
  Section~\ref{lower bound} for a formal definition.}. Berstein and Smith~\cite{BS94}
introduced the concept of {\em User Equilibrium}, weakening
further the Dafermos equilibrium, taking the fraction of the users to the
limit approaching 0. The main idea of their definition is to capture the notion of the
{\em individual commuter}, that is implicit in Wardrop's
definition for continuous functions. The Dafermos equilibrium on the
other hand is a stronger concept that captures the notion of coordinated deviations by {\em groups of commuters}.

We adopt the concept of User Equilibrium. Formally, we say that
a feasible flow $f$ that routes $r$ units of flow from $s$ to $t$ is a
User Equilibrium, iff for all $p_1, p_2 \in P$ with $f_{p_1}>0$, 

\begin{equation}\ell_{p_1}(f)\leq \lim\inf_{\epsilon\downarrow 0}\ell_{p_2}(f+\epsilon
\mathbf{1}_{p_2} -\epsilon \mathbf{1}_{p_1}), \label{user equilibrium} \end{equation}
where $\mathbf{1}_p$ denotes the flow where only one unit passes along
a path $p$.

Note that for continuous functions the above definition is identical
to the Wardrop Equilibrium. One has to be careful when designing a
Coordination Mechanism with discontinuous functions, because the
existence of equilibria is not always guaranteed\footnote{See for
  example \cite{PN98,BS94} for examples where equilibria do not exist even for the
  simplest case of two parallel links and non-decreasing functions.}.
It is important to emphasize, that all the mechanisms that we suggest in this paper use both
lower semicontinuous and regular\footnote{See~\cite{BS94} for a
  definition of regular functions.} latencies, and therefore User Equilibrium existence is
guaranteed due to the theorem of~\cite{BS94}. Moreover, since our
modified latencies are non-decreasing, all User Equilibria are also
Dafermos-Sparrow equilibria. 
From now on, we refer to the
User Equilibria as Nash Equilibria, or simply Nash flows.
\\ 
\\
\noindent {\bf Our Contribution:}
We demonstrate the possibility of
reducing the Price of Anarchy for Selfish Routing via Coordination Mechanisms. 
We obtain the following results for networks of $k$ parallel links. 
\begin{itemize}
\item if original and modified latency functions are continuous, no improvement is
possible, i.e., $\npoa \ge \poa$; see Section~\ref{sec: negative result}. 
\item for the case of affine cost functions, we describe a simple 
coordination mechanism that achieves an engineered Price of Anarchy strictly less than
4/3; see Section~\ref{sec: mechanism}. The functions $\hell_e$ are of the form
\begin{equation}\label{eq: simple}   \hell_e(x) = \begin{cases}  \ell_e(x)    & \text{for $x \le r_e$}\\
                                  \infty       & \text{for $x > r_e$}.
\end{cases} \end{equation}
For the case of two parallel links, the mechanism gives 5/4 (see Section~\ref{sec: two links}), for
Pigou's network it gives 1, see Figure~\ref{fig:Pigou}.
\item For the case of two parallel links with affine cost functions, we
describe an {\em optimal
\footnote{The lower bound that we provide in
    Section~\ref{lower bound} holds for all deterministic coordination
    mechanisms that use {\em non-decreasing
    modified latencies}, with respect to both notions of equilibrium described
    in the previous paragraph.}%
} mechanism; its engineered Price of Anarchy lies between 1.191
and 1.192 (see Sections~\ref{sec: two links advanced} and~\ref{lower
bound}). It uses modified cost functions of the form  
\begin{equation}\label{eq: refined}  \hell_e(x) = \begin{cases}  \ell_e(x)    & \text{for $x \le r_e$ and $x
\ge u_e$}\\
                                  \ell_e(u_e)       & \text{for $r_e < x < u_e$}.
\end{cases} \end{equation}
\end{itemize}

\noindent
The Price of Anarchy is a standard measure to quantify the effect of
selfish behavior. There is a vast literature studying the Price of
Anarchy for various models of selfish routing and scheduling problems
(see~\cite{NRTV07}). We show that simple coordination mechanisms can reduce the Price of Anarchy for
selfish routing games below the 4/3 worst case for networks of parallel links
and affine cost functions.

We believe that our arguments extend to more general
cost functions, e.g., polynomial cost functions. However, the restriction to
parallel links is crucial for our proof. We leave it as a major open problem to
prove results for general networks or at least more general networks, e.g.,
series-parallel networks. 
\\
\\
\noindent {\bf Implementation:} We discuss the realization of the modified cost
function in a simple traffic scenario where the driving speed on a link is
a decreasing function of the flow on the link and hence the transit time is an
increasing function. The step function in (\ref{eq: refined})
can be realized by setting a speed limit corresponding to transit
time $\ell_e(u_e)$ once the flow is above $r_e$. The functions in (\ref{eq:
simple}) can be approximately realized by access control. In any time unit only
$r_e$ items are allowed to enter the link. If the usage rate of the link is
above $r_e$, the queue in front of the link will grow indefinitely and hence
transit time will go to infinity. 
\\



\noindent {\bf Related Work:}
The concept of Coordination Mechanisms was introduced in (the conference
version of)~\cite{CKN09}. 
Coordination Mechanisms have been
used to improve the Price of Anarchy in scheduling problems for
parallel and related machines~\cite{CKN09,ILMS05,Kol08} 
as well as for unrelated
machines~\cite{AJM07,Caragiannis09}; the objective is
makespan minimization. Very recently, \cite{CCG+11} considered as an
objective the weighted sum of completion times. Truthful coordination mechanisms
have been studied in \cite{ABP06,CGP07,AngelBPT09}. 

Another very well-studied attempt to cope with selfish behavior is the
introduction of taxes (tolls) on the edges of the network in selfish
routing games~\cite{CDR03,FJM04,KK04a,KK04b,Fleischer05,Bonifaci}. 
The disutility of a player is modified and equals her
latency plus some toll for every edge that is used in her path. 
It is well known (see for example \cite{CDR03,FJM04,KK04a,KK04b}) that
so-called marginal cost tolls, i.e., $\hell_e(x) = \ell_e(x) + 
x \ell'_e(x)$, result in a Nash flow that is equal to the optimum flow for the
original cost functions.\footnote{It is important to observe that although the Nash
flow is equal to the optimum flow, its cost with respect to the marginal cost
function can be twice as large as its cost with respect to the original cost
function. For Pigou's network, the marginal costs are $\hell_1(x) = 2x$ and
$\hell_2(x) = 1$. The cost of a Nash flow of rate $r$ with $r \le 1/2$ is
$2r^2$ with respect to marginal costs; the cost of the same flow with respect
to the original cost functions is $r^2$.}
Roughgarden~\cite{R01} seeks a subnetwork of a given
network that has optimal Price of Anarchy for a given demand. 
\cite{Cole-et-al} studies the question whether tolls can reduce the cost of a Nash
equilibrium. They show that for networks with affine latencies,
marginal cost pricing does not improve the cost of a flow at Nash
equilibrium, as well as that the maximum possible benefit that one can
get is no more than that of edge removal. 
\\
\\
\noindent {\bf Discussion:}
The results of this paper are similar in spirit to the results discussed in the
previous paragraph, but also very different. The above papers assume that taxes
or tolls are determined with full knowledge of the demand rate $r$. Our
coordination mechanisms must {\em a priori} decide on the modified
latency functions \emph{without knowledge of the demand}; it must 
determine the modified functions
$\hell$ and then an adversary selects the input rate $r$. More
importantly, our target objectives are different; we want to minimize
the ratio of the modified cost (taking into account the increase of
the latencies) over the {\em original} optimal cost.%
\ignore{The negative results of \cite{Cole-et-al} hold for marginal cost prices that 
correspond to continuous and monotone modified cost functions. 
Therefore, one cannot expect to
achieve a Price of Anarchy strictly less than 4/3 by use of marginal cost
pricing, even if the demand rate is known. In fact, in Section~\ref{sec: negative result} we show that
for parallel links one cannot improve the Price of Anarchy even with general
continuous monotone functions.
Our modified cost functions are
monotone, but not continuous.} Our simple strategy presented in
Section~\ref{sec: mechanism} can be viewed as a generalization of link removal.
Removal of a link reduces the capacity of the edge to zero, our simple strategy
reduces the capacity to a threshold
$r_e$.  Following~\cite{CKN09}, we study \emph{local} mechanisms; the decision of modifying
  the latency of a link is taken based on the amount of flow that
  comes through the particular link \emph{only}.\footnote{It is not hard to see that, similarly to the case where the demand is
known, using global flow information (at least for the case of parallel
links) can lead to mechanisms with $\ePoA=1$. We would like to thank
Nicol\'as Stier Moses for making us emphasizing that distinction.}

\section{Continuous Latency Functions Yield No Improvement}\label{sec: negative result}

The network in this section consists of $k$ parallel links connecting $s$ to
$t$ and the original latency functions are assumed to be continuous and
non-decreasing. 
We show 
that substituting them by continuous functions brings no improvement.

\begin{lemma}\label{lem:continuous-lb} Assume that the original functions $\ell_e$ are continuous and non-decreasing. 
	Consider some modified latency functions $\hat \ell$ and some rate $r$ for which
	there is a Nash Equilibrium flow $\hat f$ such that the latency function $\hat \ell_i$ is 
	continuous at $\hat f_i(r)$ 
	for all $1\le i\le k$. Then $\npoa(r)\ge \poa(r)$.
\end{lemma}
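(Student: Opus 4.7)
The plan is to show directly that $\CM(r) \geq \CN(r)$; dividing by $\Copt(r)$ then yields $\npoa(r) \geq \poa(r)$. For parallel links with continuous non-decreasing original latencies, any Nash flow $f^N$ under $\ell$ has the structure $\ell_i(f^N_i) = L^*$ for a common value $L^*$ on every used edge and $\ell_i(0) \geq L^*$ on every unused edge, with total cost $r \cdot L^* = \CN(r)$. For the modified Nash flow $\hat f$, the hypothesis that $\hat\ell_i$ is continuous at $\hat f_i(r)$ lets me collapse the $\liminf$ in the User Equilibrium definition into an equality, giving a common modified latency $\hat L$ with $\hat \ell_i(\hat f_i) = \hat L$ on every edge carrying flow and $\hat \ell_j(0) \geq \hat L$ on every unused edge. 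Thus $\CM(r) \geq r \cdot \hat L$, and it suffices to prove $\hat L \geq L^*$.

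I will do this by contradiction. Assume $\hat L < L^*$. For every edge $i$ with $\hat f_i > 0$, the defining property $\hat\ell_i \geq \ell_i$ of a coordination mechanism gives $\ell_i(\hat f_i) \leq \hat\ell_i(\hat f_i) = \hat L < L^*$. Monotonicity of $\ell_i$ rules out $f^N_i = 0$, since that would force $\ell_i(\hat f_i) \geq \ell_i(0) \geq L^*$. Hence $i$ is also used by $f^N$, and the strict inequality $\ell_i(\hat f_i) < L^* = \ell_i(f^N_i)$ combined with monotonicity of $\ell_i$ forces $\hat f_i < f^N_i$ strictly. Using the trivial bound $\hat f_j = 0 \leq f^N_j$ on the remaining edges and the fact that at least one edge has $\hat f_i > 0$ since $r > 0$, I conclude
\[ r = \sum_i \hat f_i \;<\; \sum_i f^N_i = r, \]
a contradiction.

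The main subtlety I anticipate is the appeal to continuity of $\hat\ell_i$ at $\hat f_i(r)$ to pass from the $\liminf$ in the User Equilibrium definition to the clean characterization ``common modified latency $\hat L$ on used edges, at least $\hat L$ on unused edges''. Once that reduction is in place, the monotonicity/counting argument above is essentially routine; it crucially uses that the network consists of parallel links, so that total cost factors as (rate) $\times$ (common latency) and a bound on $\hat L$ alone suffices.
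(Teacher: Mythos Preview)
Your proof is correct and follows essentially the same approach as the paper. Both arguments reduce to comparing the common latency $\hat L$ of the modified equilibrium with the common latency $L^*$ of the original Nash flow, using that both flows have total mass $r$ together with monotonicity of $\ell_i$ and $\hat\ell_i \ge \ell_i$. The paper phrases this directly---pick a single edge $j$ with $\hat f_j > f_j$ (which exists by pigeonhole when $\hat f \neq f$) and chain $r\hat L = r\,\hat\ell_j(\hat f_j) \ge r\,\ell_j(\hat f_j) \ge r\,\ell_j(f_j) \ge C_N(r)$---whereas you prove $\hat L \ge L^*$ by contradiction; these are contrapositive versions of the same pigeonhole-plus-monotonicity idea.
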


\begin{proof} 
It is enough to show that 
	$ \CM(r)\ge C_N(r)$.
Let $f$ be a Nash flow for rate $r$ and the original cost functions. If $f =
\hat f$, the claim is obvious. 
If $\hat f \not= f$, there must be a $j$ with $\hat f_j(r) > f_j(r)$.
The local continuity of $\hat \ell_i$ at  $\hat f_i(r)$,  implies that
	$\hat \ell_i(\hat f_i(r) ) = \hat \ell_{i'}(\hat f_{i'}(r)) $, for all 
	$i,i'\le k$  such that $\hat f_i(r), \hat f_{i'}(r) >0$.
Therefore, 
\[ \CM(r) =	\hat C(\hat f(r))= \sum_{i=1}^{k} \hat f_i(r) \hat \ell_i(\hat f_i(r))
		= r\cdot \hat \ell_j(\hat f_j(r)) 
		\ge 	r\cdot \ell_j(\hat f_j(r)) 		
		\ge 	r\cdot \ell_j( f_j(r)) \]
since $\hell_j(x) \ge \ell_j(x)$ for all $x$ and $\ell_j$ is non-decreasing. Since
$f$ is a Nash flow we have $\ell_{i}(f_{i}(r))\le \ell_j(f_j(r))$ for any $i$
with $f_{i}(r)>0$. Thus
\[ C_N(r) = \sum_{i=1}^{k}  f_i(r) \ell_i( f_i(r)) \le r \cdot \ell_{j}(
f_{j}(r)). \]
 \end{proof}

\section{A Simple Coordination Mechanism}\label{sec: mechanism}

Let $\ell_i(x) = a_i x + b_i = (x + \gamma_i)/\lambda_i$ be the latency
function of the $i$-th link, $1\le i \le k$. We call $\lambda_i$ the {\em efficiency}
of the link. We order the links in order of increasing $b$-value and
assume $0\leq b_1 < b_2 < \ldots <
b_{k}$ as two links with the same $b$-value may be combined (by adding their
efficiencies). We say that a link is {\em used} if it
carries positive flow. We may assume $a_i > 0$ for all $i < k$; if $a_i = 0$,
links $i+1$ and higher will never be used.  The following theorem summarizes some basic facts about
optimal flows and Nash flows; it is proved by straightforward
calculations.\footnote{In a Nash flow all used links have the same
latency. Thus, if $j$ links are used at rate $r$ and $f_i^N$ is the flow on the
$i$-th link, then $a_1 f_1^N + b_1 = \ldots = a_j f_j^N + b_j \le b_{j+1}$ and
$r = f_1^N + \ldots + f_j^N$. The values for $r_j$ and $f_i^N$ follow from
this. Similarly, in an optimal flow all used links have
the same marginal costs.} We state the theorem for the case that $a_k$ is
positive. The theorem is readily extended to the case $a_k = 0$ by letting
$a_k$ go to zero and determining the limit values. We will only use the theorem in
situations, where $a_k > 0$.

\begin{theorem}\label{thm: basic facts} Let 
$0\leq b_1 <b_2 < \ldots < b_k$ 
and $\lambda_i \geq 0$ for all $i$. Let 
$\Lambda_j = \sum_{i \le j} \lambda_i$ and $\Gamma_j =
\sum_{i \le j} \gamma_j$. Consider a fixed rate $r$ and let $f_i^*$ and $f_i^N$, $1 \le i
\le k$, be the optimal flow and the Nash flow for rate $r$
respectively. Let 
\[ r_j = \sum_{1 \le i < j} (b_{j} - b_i)\lambda_i = \sum_{1 \le i < j} (b_{i+1} - b_i)\Lambda_i. \]
Then

\noindent (a) $\Gamma_j + r_j = b_j \Lambda_j$ and $\Gamma_{j-1} + r_j =
b_j \Lambda_{j-1}$. 
\\

\noindent (b) If Nash uses exactly $j$ links at rate $r$ then 
\[ r_j\leq r \leq r_{j+1},\quad\quad
   f_i^N =  \frac{r \lambda_i}{\Lambda_j} + \delta_i, \quad\text{where }
\delta_i = \frac{\Gamma_j \lambda_i}{\Lambda_j} - \gamma_i,\quad\text{and}\quad
\CN(r) = \frac{1}{\Lambda_j} \left( r^2 + \Gamma_j r \right).\]

\noindent (c) 
If Opt uses exactly $j$ links at rate $r$ then 
\[\frac{r_j}{2}\leq r \leq \frac{r_{j+1}}{2},\quad
f_i^* =  \frac{r \lambda_i}{\Lambda_j} + \delta_i/2, \quad\text{where }
\delta_i = \frac{\Gamma_j \lambda_i}{\Lambda_j} - \gamma_i, \]
and 
\[ \Copt(r) = \frac{1}{\Lambda_j} \left( r^2 + \Gamma_j r \right) - \sum_{i \le
j} \frac{\delta_i^2}{4\lambda_i} =  \frac{1}{\Lambda_j} \left( r^2 + \Gamma_j r\right)
- C_j , \text{where } C_j = \left(\sum_{i=1}^{h}\sum_{h=i}^{j}(b_h - b_i)^2
\lambda_h \lambda_i\right)/(4\Lambda_j) .\]

\noindent (d) If $s < r$ and Opt uses exactly $j$ links at $s$ and $r$ then 
\[  \Copt(r) = \Copt(s) + \frac{1}{\Lambda_j} \left( (r - s)^2 + (\Gamma_j +
2s) (r -s) \right).\]

\noindent (e) If $s < r$ and Nash uses exactly $j$ links at $s$ and $r$ then 
\[  \CN(r) = \CN(s) + \frac{1}{\Lambda_j} \left( (r - s)^2 + (\Gamma_j +
2s) (r -s) \right).\]


\end{theorem}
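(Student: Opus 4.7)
The plan is to exploit the two variational characterizations of flows on parallel affine links: a Nash flow equalizes latencies on every used link, so $a_i f_i^N + b_i = L$ for each $i \le j$, while an optimal flow equalizes marginal costs, so $2 a_i f_i^* + b_i = M$. Under these two characterizations, all five parts reduce to algebra in the parameters $\lambda_i = 1/a_i$ and $\gamma_i = b_i \lambda_i$, exactly as the footnote hints.

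For (a), I would just expand $\Gamma_j + r_j = \sum_{i\le j} b_i\lambda_i + \sum_{i<j}(b_j - b_i)\lambda_i$; everything cancels except $b_j\Lambda_j$, and subtracting $\gamma_j = b_j\lambda_j$ from both sides gives the companion identity. The alternate expression for $r_j$ is an Abel summation applied to $b_j - b_i = \sum_{h=i}^{j-1}(b_{h+1}-b_h)$. For (b), solving $f_i^N = \lambda_i L - \gamma_i$ and summing over $i\le j$ yields $L = (r+\Gamma_j)/\Lambda_j$, after which the formula for $f_i^N$ and the identity $\CN(r) = r \cdot L$ are immediate. The range $r_j \le r \le r_{j+1}$ is forced by $f_j^N \ge 0$ (equivalent to $L \ge b_j$, i.e., $r \ge r_j$ via (a)) together with $L \le b_{j+1}$ (the condition that rules out opening link $j+1$), which rewrites to $r \le r_{j+1}$ again via (a).

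Part (c) proceeds analogously: $f_i^* = \lambda_i(M - b_i)/2$ with $M = (2r+\Gamma_j)/\Lambda_j$, and the cleanest way to compute the cost is to note that $f_i^* + \gamma_i = \lambda_i(M+b_i)/2$, so $f_i^*\ell_i(f_i^*) = \lambda_i(M^2 - b_i^2)/4$. Summing over $i\le j$ and substituting for $M$ gives $\Copt(r) = (r^2+\Gamma_j r)/\Lambda_j + \Gamma_j^2/(4\Lambda_j) - \tfrac14\sum_{i\le j} b_i^2\lambda_i$. The one step that is not purely mechanical is recognizing the negative correction as $C_j$ in the two equivalent forms stated. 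For the $\delta_i^2/\lambda_i$ form, write $\delta_i = \lambda_i(\Gamma_j/\Lambda_j - b_i)$ and expand. For the double-sum form, I would multiply through by $4\Lambda_j$ and apply the symmetrization identity $\sum_{h,i}\lambda_h\lambda_i(b_i^2 - b_h b_i) = \tfrac12\sum_{h,i}\lambda_h\lambda_i(b_h - b_i)^2$, which rewrites $\Lambda_j\sum_i b_i^2\lambda_i - \Gamma_j^2$ as a sum over unordered pairs $i < h$.

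Parts (d) and (e) are then immediate corollaries of the closed forms in (b) and (c): on any interval $[s,r]$ over which the number $j$ of used links is constant, both Nash and optimal cost grow by $[(r^2+\Gamma_j r)-(s^2+\Gamma_j s)]/\Lambda_j$ (the $C_j$ correction in (c) cancels because $j$ is the same at $s$ and $r$), and writing $r^2-s^2 = (r-s)(r+s)$ together with $r+s = (r-s)+2s$ yields the stated expression. The only genuine obstacle in the whole theorem is the double-sum identity for $C_j$; the rest is bookkeeping driven by the two variational characterizations.
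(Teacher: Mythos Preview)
Your proposal is correct and follows exactly the approach the paper itself indicates: the paper does not give a full proof but merely remarks (in the footnote preceding the theorem) that Nash equalizes latencies and Opt equalizes marginal costs, and that everything else is ``straightforward calculations.'' You have filled in precisely those calculations, including the only nontrivial point---the symmetrization identity yielding the double-sum form of $C_j$---so there is nothing to add.
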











We next define our simple coordination mechanism. In the case of $k$
  links, it is governed by parameters $R_1$, $R_2, \dots, R_{k-1}$; $R_i \ge 2$ for all $i$. We call the $j$-th link
\emph{super-efficient} (with respect to parameters $R_1$ to $R_{k-1}$)
if $\lambda_{j} > R_{j-1} \Lambda_{j-1}$. In Pigou's
network (see Figure~\ref{fig:Pigou}), the second link is super-efficient for any choice of $R_1$ since
$\lambda_2 = \infty$ and $\Lambda_1 = \lambda_1 = 1$. Super-efficient links are the cause
of high Price of Anarchy. Observe that Opt starts using the $j$-th link at rate
$r_j/2$ and Nash starts using it at rate $r_j$. If the $j$-th link is
super-efficient, Opt will send a significant fraction of the total flow across
the $j$-th link and this will result in a high Price of Anarchy. Our
coordination mechanism induces the Nash flow to use super-efficient links
earlier. The latency functions $\hell_i$ are
defined as follows: $\hell_i = \ell_i$ if there is no super-efficient link $j > i$; in particular the latency function of the highest
link (= link $k$) is unchanged. Otherwise, we choose a threshold value $T_i$
(see below) and set $\hell_i(x) = \ell_i(x)$ for $x \le
T_i$ and $\hell_i(x) = \infty$ for $x > T_i$. The threshold values are chosen
so that the following behavior results. We call this behavior \emph{modified
Nash (MN)}.

Assume that Opt uses $h$ links, i.e., $r_h/2 \le r \le r_{h+1}/2$. If
$\lambda_{i+1} \le R_i \Lambda_i$ for all $i$, $1 \le i < h$, MN behaves like
Nash. Otherwise, let $j$ be minimal such that link $j+1$ is super-efficient; 
MN changes its behavior at rate $r_{j+1}/2$. More precisely, it freezes the flow across
the first $j$ links at their current values when the total flow is equal to $r_{j+1}/2$
and routes any additional flow across links
$j+1$ to $k$. The thresholds for the lower links are chosen in such a way that this
freezing effect takes place. The additional flow is routed by using the
strategy recursively. In other
words, let $j_1 + 1$, \ldots, $j_t + 1$ be the indices of the super-efficient
links. Then MN changes behavior at rates $r_{j_i + 1}/2$. At this rate the
flow across links $1$ to $j_i$ is frozen and additional flow is routed across
the higher links. 

We use $\CMN(r) = \CM^{R_1,\ldots,R_{k - 1}}(r)$ to denote the cost of MN at
rate $r$ when operated with parameters $R_1$ to $R_{k - 1}$. Then $\ePoA(r) =
\CMN(r)/\Copt(r)$. For the analysis of MN we use the following strategy. We
first investigate the \emph{benign} case when there is no super-efficient link. 
In the benign case, MN behaves like Nash and 
the worst case bound of 4/3 on the PoA can never be attained. More precisely,
we will exhibit a 
function $B(R_1,\ldots,R_{k-1})$ which is smaller than $4/3$ for all choices
of the $R_i$'s and will prove $\CMN(r) \le B(R_1,\ldots,R_{k - 1})
\Copt(r)$. We then investigate the non-benign case. We will derive
a recurrence relation for 
\[  \ePoA(R_1,\ldots,R_{k-1}) = \max_{r} \frac{\CM^{R_1,\ldots,R_{k - 1}}(r)}{\Copt(r)}. \]
In the case of a single link, i.e., $k = 1$, MN behaves like Nash which in
turn is equal to Opt. Thus \ignore{\Kurt{$\ePoA = 1$}}{$\ePoA() = 1$}. The coming subsections are devoted
to the analysis of two links and more than two links, respectively. 

\subsection{Two Links}\label{sec: two links}

The modified algorithm is determined by a parameter $R \ge 2$. If $\lambda_2 \le
R \lambda_1$, modified Nash is identical to Nash. If $\lambda_2 > R \lambda_1$,
the modified algorithm freezes the flow across the first link at 
$r_2/2$ once it reaches this level, i.e., $\hell_1(x) = \ell_1(x)$ for $x \le
r_2/2$ and $\hell_1(x) = \infty$ for $x > r_2/2$.\footnote{In Pigou's network we have $\ell_1(x) = x$ and $\ell_2(x) = 1$. Thus
$\lambda_2 = \infty$.  The modified cost functions are $\hell_2(x) = \ell_2(x)$
and $\hell_1(x) = x$ for $x \le r_2/2 = 1/2$ and $\hell_1(x) = \infty$ for $x >
1/2$. The Nash flow with respect to the modified cost function is identical to the
optimum flow in the original network and $\CM(f^*) = C(f^*)$. Thus $\ePoA = 1$
for Pigou's network.}

\begin{theorem}\label{thm:two-links-simple} 
For the case of two links, $\ePoA \le \max\left\{1 + 1/R,(4 + 4R)/(4 + 3R)\right\}$. In particular
$\ePoA = 5/4$ for $R = 4$. 
\end{theorem}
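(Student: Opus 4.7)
The argument splits along the two branches of the mechanism. Let $\lambda := \lambda_2/\lambda_1$; the benign case is $\lambda \le R$ and the super-efficient case is $\lambda > R$. By rescaling the flow rate (which preserves the ratio $\CMN/\Copt$) and shifting both latencies by $-b_1$ (which, since $\CMN \ge \Copt$, can only increase the ratio), I may assume $\lambda_1 = 1$ and $b_1 = 0$, so that $\ell_1(x) = x$, $\ell_2(x) = x/\lambda + b$ with $b := b_2 > 0$ and $r_2 = b$.

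In the benign case $\CMN = \CN$, so the task reduces to bounding the classical $\PoA(r) = \CN(r)/\Copt(r)$ via Theorem~\ref{thm: basic facts}. For $r \le b/2$ both flows use only link~1 and $\PoA(r) = 1$. For $b/2 \le r \le b$, Nash still uses only link~1 while Opt uses both; substituting parts~(b) and~(c) gives $\PoA(r) = 4r^2(1+\lambda)/(4r^2 + 4b\lambda r - b^2\lambda)$, whose derivative in $r$ has the same sign as $(2r - b) \ge 0$, so the ratio is increasing on this range. For $r \ge b$ both flows use both links, and the theorem yields $\PoA(r) = 1 + (b^2\lambda/4)/(r^2 + b\lambda r - b^2\lambda/4)$, which is strictly decreasing in $r$. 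Hence $\PoA$ attains its maximum at $r = r_2 = b$, with value $(4+4\lambda)/(4+3\lambda)$; as this expression is increasing in $\lambda$ and $\lambda \le R$, we obtain $\ePoA \le (4+4R)/(4+3R)$.

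In the super-efficient case, the mechanism gives $\CMN(r) = r^2 = \Copt(r)$ for $r \le b/2$, while for $r > b/2$ link~1 is frozen at $b/2$ and the residual $s := r - b/2$ flows on link~2, so $\CMN(r) = b^2/4 + bs + s^2/\lambda$. Part~(d) of Theorem~\ref{thm: basic facts}, applied from $\Copt(b/2) = b^2/4$, simplifies to $\Copt(r) = b^2/4 + bs + s^2/(1+\lambda)$. Numerator and denominator of $\CMN(r)/\Copt(r)$ share the common part $b^2/4 + bs$, and a direct derivative calculation (the sign reduces to $s(b/2 + s)(1/\lambda - 1/(1+\lambda)) \ge 0$) shows the ratio is strictly increasing in $s$, with supremum $1 + 1/\lambda$ as $s \to \infty$. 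Since $\lambda > R$ in this case, this case contributes at most $1 + 1/R$. Combining the two cases yields $\ePoA \le \max\{1 + 1/R,\,(4+4R)/(4+3R)\}$, and at $R = 4$ both arguments evaluate to $5/4$.

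The main obstacle is locating the maximum of $\PoA(r)$ in the benign case exactly at the transition $r = r_2$: this requires separate monotonicity arguments on $[b/2, b]$ and on $[b, \infty)$, meeting at the point where Nash switches from using one link to using both, each reducing to a short sign check. The super-efficient case is comparatively transparent once one notices that the clean formula for $\Copt$ via part~(d) makes $\Copt$ and $\CMN$ agree in their lower-order-in-$s$ behaviour, so only the quadratic-in-$s$ term governs the ratio.
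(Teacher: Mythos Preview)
Your proof is correct and follows essentially the same route as the paper: split into the benign case $\lambda_2 \le R\lambda_1$ (where $\CMN = \CN$ and the maximum of $\PoA(r)$ sits at $r = r_2$) and the super-efficient case (where the ratio is controlled by $\Lambda_2/\lambda_2$). Your normalization $b_1 = 0$, $\lambda_1 = 1$ is a harmless simplification not in the paper, and you spell out the monotonicity checks that the paper only asserts; otherwise the arguments coincide.
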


\begin{proof} Consider first the benign case $\lambda_2 \le R \Lambda_1$. There are three regimes: 
for $r \le r_2/2$, Opt and Nash behave
identically. For $r_2/2 \le r \le r_2$, Opt uses both links and Nash uses only
the first link, and for $r \ge r_2$, Opt and Nash use both links. 
$\PoA(r)$ is increasing for $r \le r_2$ and
decreasing for $r \ge r_2$. The worst case is at $r = r_2$. 
\ignore{&= \frac{\CN(r)}{\Copt(r)} = \frac{\Copt(r_2/2) +
\frac{1}{\Lambda_1}\left((r - r_2/2)^2 + (\Gamma_1 + r_2)(r - r_2/2)
\right)}{\Copt(r_2/2) + \frac{(r - r_2/2)^2}{\Lambda_2} + b_2(r - r_2/2)} \\
&= \frac{\Copt(r_2/2) +
\frac{(r - r_2/2)^2}{\Lambda_1} + b_2 (r - r_2/2)}{\Copt(r_2/2) + \frac{(r -
r_2/2)^2}{\Lambda_2} + b_2(r - r_2/2)}.
\end{align*}
The last equality uses $\Gamma_1 + r_2 = b_2
\lambda_1$. By Lemma~\ref{lem: derivative}, the sign of the derivative is equal
to the sign of $\frac{1}{\Lambda_1} - \frac{1}{\Lambda_2}$ and hence is
non-negative. 

Assume $r \ge r_2$. Then
\[ PoA(r) = \frac{\CN(r)}{\Copt(r)} = \frac{\CN(r)}{\CN(r) - C_2} = \frac{1}{1
- C_2/\CN(r)}\]
and hence $\PoA(r)$ is a decreasing function of $r$. The worst case occurs for
$r = r_2$. We have thus shown that $\PoA(r)$ assumes its maximum at $r_2$.} 
Then $\PoA(r_2) = \CN(r_2)/\Copt(r_2) = \CN(r_2)/(\CN(r_2) - C_2) = 1/(1 - C_2/\CN(r_2))$. 
We upper-bound $C_2/\CN(r_2)$. Recall that $r_2 = (b_2 -
b_1)\lambda_1$, $r_2 + \Gamma_1 = b_2 \lambda_1$ and $\CN(r_2) =
1/\lambda_1(r_2^2 + \Gamma_1 r_2)$. We obtain 
\[ \frac{C_2}{\CN(r_2)} = \frac{(b_2 - b_1)^2 \lambda_1 \lambda_2}{4 \Lambda_2
(1/\lambda_1) (r_2^2 + \gamma_1 r_2)} 
= \frac{(b_2 - b_1)^2 \lambda_1 \lambda_2}{4 \Lambda_2
(1/\lambda_1) (b_2 - b_1) \lambda_1 b_2 \lambda_1}
\le \frac{\lambda_2}{4 \Lambda_2} \le \frac{1}{4 (1 + 1/R)}.
\]
Thus $\PoA(r) \le B(R) \assign 
\frac {1} {1 - \frac R {4(R+1)}} 
= (4 + 4R)/(4 + 3R)$. 

We come to the case $\lambda_2 > R \Lambda_1$:
There are two regimes: for $r \le r_2/2$, Opt and MN behave
identically. For $r > r_2/2$, Opt uses both links and MN routes
$r_2/2$ over the first link and $r - r_2/2$ over the second link. Thus for $r
\ge r_2/2$:
\[ \ePoA(r) = \frac{\CMN(r)}{\Copt(r)} = \frac{\Copt(r_2/2) +
\frac{(r - r_2/2)^2}{\lambda_2} + b_2(r - r_2/2)}
{\Copt(r_2/2) + \frac{(r - r_2/2)^2}{\Lambda_2} + b_2(r - r_2/2)} \le
\frac{\Lambda_2}{\lambda_2} \le 1 + 1/R.\] 
 \end{proof}

\subsection{Many Links}\label{sec: many links} 

As already mentioned, we distinguish cases. We first study the benign case 
$\lambda_{i+1} \le R_i \Lambda_i$ for all $i$, $1 \le i < k$, and then deal
with the non-benign case. 

\paragraph{The Benign Case:}  We assume $\lambda_{i+1} \le R_i \Lambda_i$ for all
$i$, $1 \le i < k$. Then MN behaves like Nash. We will show $e\PoA \le
B(R_1,\ldots,R_{k-1}) < 4/3$; here $B$ stands for benign case or base case. 
Our proof strategy
is as follows; we will first show (Lemma~\ref{lem:upper_bound_ratio_flows})
that for the $i$-th link the ratio of Nash flow to optimal flow
is bounded by $2 \Lambda_k/(\Lambda_i + \Lambda_k)$. This ratio is never more
than two; in the benign case, it is bounded away from two. We will then use
this fact to derive a bound on the Price of Anarchy (Lemma~\ref{lem:upper-bound-PoA1}).

\begin{lemma}\label{lem:upper_bound_ratio_flows}
Let $h$ be the number of links that Opt is using. Then 
\[ \frac{f_i^N}{f_i^*}\leq\frac{2\Lambda_h}{\Lambda_i+\Lambda_h}\]
for $i \le h$. 
	If $\lambda_{i'+1} \le R_{i'} \Lambda_{i'}$ for all $i'$, then 
\[ \frac{2\Lambda_h}{\Lambda_i+\Lambda_h} \le \frac{2 P}{P + 1},\]
where $P := \prod_{1 \leq i < k} (1 + R_i)$. 
\end{lemma}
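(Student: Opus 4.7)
The plan is to reduce the first (per-link) inequality to a single scalar inequality that can be handled cleanly via the convexity of an auxiliary piecewise-linear function; the second inequality is then a short telescoping calculation.

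\textbf{Reduction.} Let $j$ be the number of links used by the Nash flow at rate $r$. Since Nash starts using link $l$ at $r=r_l$ while Opt starts at $r=r_l/2$, we have $j\le h$. For indices $j<i\le h$ we have $f_i^N=0$ and the bound is trivial, so assume $i\le j$. By Theorem~\ref{thm: basic facts}(b),(c) I can write
\[ f_i^N=\lambda_i L-\gamma_i,\quad L=\frac{r+\Gamma_j}{\Lambda_j}, \qquad f_i^*=\frac{\lambda_i M-\gamma_i}{2},\quad M=\frac{2r+\Gamma_h}{\Lambda_h}, \]
so $2f_i^*-f_i^N=\lambda_i(M-L)$. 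Clearing denominators, the desired bound $f_i^N/f_i^*\le 2\Lambda_h/(\Lambda_i+\Lambda_h)$ is equivalent to the single scalar inequality
\[ \Lambda_h(M-L)\ \ge\ \Lambda_i(L-b_i). \]

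\textbf{The key inequality via convexity.} I would introduce the convex, piecewise-linear function
\[ \sigma(x)=\sum_{l=1}^{k}(\lambda_l x-\gamma_l)_+. \]
Its right derivative at $x$ equals $\sum_{l:b_l<x}\lambda_l$, so $\sigma'\ge \Lambda_i$ on $(b_i,L]$ (every $l\le i$ is active, because $b_l\le b_i<x$), and $\sigma'\le \Lambda_h$ on $[L,M]$ (Opt uses exactly $h$ links at $M$, forcing $M\le b_{h+1}$). A direct calculation gives $\sigma(L)=\Lambda_j L-\Gamma_j=r$, $\sigma(M)=\Lambda_h M-\Gamma_h=2r$, and obviously $\sigma(b_i)\ge 0$. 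Chaining these,
\[ \Lambda_h(M-L)\ \ge\ \sigma(M)-\sigma(L)\ =\ r\ \ge\ \sigma(L)-\sigma(b_i)\ \ge\ \Lambda_i(L-b_i), \]
where the first inequality is the upper slope bound on $[L,M]$, the equality uses $\sigma(L)=r,\sigma(M)=2r$, the third uses $\sigma(b_i)\ge 0$, and the last is the lower slope bound on $[b_i,L]$.

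\textbf{Deducing the $P$-bound.} Under the benign hypothesis $\lambda_{l+1}\le R_l\Lambda_l$ one has $\Lambda_{l+1}=\Lambda_l+\lambda_{l+1}\le(1+R_l)\Lambda_l$, so telescoping gives $\Lambda_h/\Lambda_i\le \prod_{l=i}^{h-1}(1+R_l)\le P$. Hence $\Lambda_i\ge \Lambda_h/P$, and
\[ \frac{2\Lambda_h}{\Lambda_i+\Lambda_h}\ \le\ \frac{2\Lambda_h}{\Lambda_h/P+\Lambda_h}\ =\ \frac{2P}{P+1}. \]

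\textbf{Main obstacle.} The technical heart is identifying the right auxiliary function. A direct proof of $\Lambda_h(M-L)\ge \Lambda_i(L-b_i)$ by substituting the explicit formulas forces a case split ($j=h$ vs.\ $j<h$, together with subintervals of $r$) and messy bookkeeping with the $b_l,\gamma_l,\lambda_l$; the convexity argument via $\sigma$ sidesteps this by re-interpreting both sides of the target inequality as slope estimates for the same convex function evaluated at the natural breakpoints $b_i$, $L$, $M$.
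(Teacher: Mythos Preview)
Your proof is correct and takes a genuinely different route from the paper. The paper fixes the regime where Nash uses $j$ links and Opt uses $h$ links, writes the ratio $F(r)=f_i^N/f_i^*$ as a quotient of two affine functions of $r$, observes that the sign of $F'(r)$ is constant, and therefore evaluates $F$ at the two endpoints $r=r_j$ and $r=r_{j+1}$; each endpoint is then simplified by hand to the bound $2\Lambda_h/(\Lambda_i+\Lambda_h)$ via a chain of algebraic manipulations with the $b_l$'s and $\lambda_l$'s. Your argument instead reduces the target bound to the single scalar inequality $\Lambda_h(M-L)\ge \Lambda_i(L-b_i)$ and proves it in one stroke via the convex function $\sigma(x)=\sum_l(\lambda_l x-\gamma_l)_+$, using that $\sigma(b_i)\ge 0$, $\sigma(L)=r$, $\sigma(M)=2r$ together with the two slope bounds. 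What the paper's approach buys is that it stays entirely within the explicit formulas of Theorem~\ref{thm: basic facts} and needs no auxiliary object; what your approach buys is the elimination of the endpoint case split and a cleaner conceptual picture (indeed, $\sigma(x)$ is exactly the total demand at which the common Nash latency, respectively the common marginal cost, equals $x$, which is why $\sigma(L)=r$ and $\sigma(M)=2r$). One small point you leave implicit but should state: $L<M$ is needed for the slope bound on $[L,M]$ to read the right way, and it follows immediately from $\sigma(L)=r<2r=\sigma(M)$ and the strict monotonicity of $\sigma$ past $b_1$.
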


\begin{proof} 
Let $j$ be the number of links that Nash is using.
For $i > j$, the Nash flow on the $i$-th link is zero and the claim is obvious.
For $i \le j$, we can write the Nash and the optimal flow through link $i$ as
\[ f_i^N = r \lambda_i/\Lambda_j + (\Gamma_j \lambda_i
/\Lambda_j - \gamma_i) \quad\text{and}\quad 
f_i^* = r \lambda_i/\Lambda_h + (\Gamma_h \lambda_i
/\Lambda_h - \gamma_i)/2.\]
Therefore their ratio as a function of $r$ is
\begin{align*}
F(r)=\frac{f_i^N}{f_i^*} = \frac{\Lambda_h}{\Lambda_j}\cdot \frac{2r+2\Gamma_j-2b_i\Lambda_j}{2r+\Gamma_h-b_i\Lambda_h}.
\end{align*}
The sign of the derivative $F'(r)$ is equal to the sign of
$\Gamma_h-b_i\Lambda_h-2\Gamma_j+2b_i\Lambda_j$ and hence constant. Thus $F(r)$
attains its maximum either for $r_j$ or for $r_{j+1}$. We have 
\begin{align*}
F(r_{j+1}) &\leq \frac{\Lambda_h}{\Lambda_j}\cdot
\frac{2r_{j+1}+2\Gamma_j-2b_i\Lambda_j}{2r_{j+1}+\Gamma_h-b_i\Lambda_h}
= \frac{\Lambda_h}{\Lambda_j}\cdot
\frac{2(b_{j+1}-b_i)\Lambda_j}{2b_{j+1}\Lambda_{j+1}-2\Gamma_{j+1}+\Gamma_h-b_i\Lambda_h}\\
& = \frac{2(b_{j+1}-b_i)\Lambda_h}{\sum_{g\leq
    j+1}(2b_{j+1}-2b_g)\lambda_g+\sum_{g\leq h}(b_g-b_i)\lambda_g}
= \frac{2(b_{j+1}-b_i)\Lambda_h}{\sum_{g\leq
    j}(2b_{j+1}-b_g-b_i)\lambda_g+\sum_{j<g\leq
    h}(b_g-b_i)\lambda_g}\\
& = \frac{2(b_{j+1}-b_i)\Lambda_h}{\sum_{g\leq
    i}(2b_{j+1}-b_g-b_i)\lambda_g+\sum_{i<g\leq
    j}(2b_{j+1}-b_g-b_i)\lambda_g+\sum_{j<g\leq
    h}(b_g-b_i)\lambda_g}\\
& \leq \frac{2(b_{j+1}-b_i)\Lambda_h}{\sum_{g\leq
    i}2(b_{j+1}-b_i)\lambda_g+\sum_{i<g\leq
    h}(b_{j+1}-b_i)\lambda_g}\\
& = \frac{2\Lambda_h}{\sum_{g\leq
    i}2\lambda_g+\sum_{i<g\leq h}\lambda_g} = \frac{2\Lambda_h}{\Lambda_i+\Lambda_h}
\end{align*}
and 
\begin{align*}
F(r_j) &\leq \frac{\Lambda_h}{\Lambda_j}\cdot
\frac{2r_{j}+2\Gamma_j-2b_i\Lambda_j}{2r_{j}+\Gamma_h-b_i\Lambda_h}
= \frac{\Lambda_h}{\Lambda_j}\cdot
\frac{2(b_{j}-b_i)\Lambda_j}{2b_{j}\Lambda_{j}-2\Gamma_{j}+\Gamma_h-b_i\Lambda_h}\\
& = \frac{2(b_{j}-b_i)\Lambda_h}{\sum_{g\leq
    j}(2b_{j}-2b_g)\lambda_g+\sum_{g\leq h}(b_g-b_i)\lambda_g}
= \frac{2(b_{j}-b_i)\Lambda_h}{\sum_{g\leq
    j}(2b_{j}-b_g-b_i)\lambda_g+\sum_{j<g\leq
    h}(b_g-b_i)\lambda_g}\\
& = \frac{2(b_{j}-b_i)\Lambda_h}{\sum_{g\leq
    i}(2b_{j}-b_g-b_i)\lambda_g+\sum_{i<g\leq
    j}(2b_{j}-b_g-b_i)\lambda_g+\sum_{j<g\leq
    h}(b_g-b_i)\lambda_g}\\
& \leq \frac{2(b_{j}-b_i)\Lambda_h}{\sum_{g\leq
    i}2(b_{j}-b_i)\lambda_g+\sum_{i<g\leq
    h}(b_{j}-b_i)\lambda_g}
= \frac{2\Lambda_h}{\sum_{g\leq
    i}2\lambda_g+\sum_{i<g\leq h}\lambda_g} = \frac{2\Lambda_h}{\Lambda_i+\Lambda_h}.
\end{align*}

If $\lambda_{i'+1} \le R_{i'} \Lambda_{i'}$ for all $i'$, then $\Lambda_{i'+1} =
\lambda_{i'+1} + \Lambda_{i'} \le (1 + R_{i'}) \Lambda_{i'}$ for all $i'$ and hence
$\Lambda_{h} \le \Lambda_k  \le  P \Lambda_i$. 
 \end{proof}

\begin{lemma}\label{lem:arithmetic-inequality}
For any positive reals $\mu$, $\alpha$, and $\beta$ with $1 \le \mu \le 2$ and
$\alpha/\beta \le \mu$, $\beta \alpha \le \frac{\mu - 1}{\mu^2} \alpha^2 +
\beta^2$. 
\end{lemma}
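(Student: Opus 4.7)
The plan is to rewrite the target inequality as a quadratic in $\beta$ (with $\alpha$ treated as a parameter) and factor it. Moving everything to one side, the desired inequality becomes
\[ q(\beta) \;\assign\; \beta^{2} \;-\; \alpha\beta \;+\; \frac{\mu-1}{\mu^{2}}\,\alpha^{2} \;\ge\; 0. \]
So it suffices to find the roots of $q$ and show that $\beta$ lies outside the interval between them (or that the roots coincide).

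First I would compute the discriminant of $q$ viewed as a quadratic in $\beta$. It equals $\alpha^{2} - 4\,\frac{\mu-1}{\mu^{2}}\,\alpha^{2} = \alpha^{2}\,\frac{\mu^{2}-4\mu+4}{\mu^{2}} = \alpha^{2}\,\frac{(\mu-2)^{2}}{\mu^{2}}$, which is a perfect square. This lets me factor cleanly: the two roots come out to $\beta = \alpha/\mu$ and $\beta = \alpha(\mu-1)/\mu$, so
\[ q(\beta) \;=\; \bigl(\beta - \tfrac{\alpha}{\mu}\bigr)\bigl(\beta - \tfrac{(\mu-1)\alpha}{\mu}\bigr). \]

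Now I would use the two hypotheses to show both factors are non-negative. The assumption $\alpha/\beta \le \mu$ gives immediately $\beta \ge \alpha/\mu$, so the first factor is $\ge 0$. For the second factor, I would observe that because $1 \le \mu \le 2$ we have $(\mu-1)/\mu \le 1/\mu$, and therefore $\alpha(\mu-1)/\mu \le \alpha/\mu \le \beta$, making the second factor non-negative as well. Hence $q(\beta) \ge 0$, which is exactly the claim.

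There is no real obstacle here; the only thing to watch is the direction of the inequality $(\mu-1)/\mu \le 1/\mu$, which requires $\mu - 1 \le 1$, i.e.\ $\mu \le 2$ — explaining why the upper bound on $\mu$ is needed. The lower bound $\mu \ge 1$ is needed to ensure the root $\alpha(\mu-1)/\mu$ is non-negative (so that the hypothesis $\beta \ge \alpha/\mu$ actually dominates it). So the proof will be a three- or four-line calculation once the factorisation is written down.
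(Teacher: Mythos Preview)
Your argument is correct and is essentially the same as the paper's: both proofs factor the quadratic and check each factor is non-negative using $\alpha/\beta \le \mu$ and $\mu \le 2$. The only cosmetic difference is that the paper first substitutes $\delta = \alpha/(\mu\beta)$ and factors $(\mu-1)\delta^2 - \mu\delta + 1$ as $(1-\delta)(1+\delta-\mu\delta)$, whereas you factor directly in $\beta$; under the substitution your factors $\beta - \alpha/\mu$ and $\beta - \alpha(\mu-1)/\mu$ become $\beta(1-\delta)$ and $\beta(1+\delta-\mu\delta)$, so the two factorizations coincide.
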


\begin{proof} 
We may assume $\beta \ge 0$. If $\beta = 0$, there is nothing to show. 
So assume $\beta > 0$ and let $\alpha/\beta = \delta \mu$ for some $\delta
\le 1$. We need to show (divide the target inequality by $\beta^2$) 
$\delta \mu \le (\mu - 1) \delta^2 + 1$ or equivalently $\mu \delta (1 - \delta)
\le (1 - \delta) (1 + \delta)$. This inequality holds for $\delta \le 1$ and
$\mu \le 2$.
 \end{proof}

\begin{lemma}\label{lem:upper-bound-PoA1}
If $f^N_i/f_i^*\leq \mu \le 2$ for all $i$, then $\PoA\leq {\mu^2}/(\mu^2-\mu+1)$. 
If $\lambda_{j+1} \le R_j \Lambda_j$ for all $j$, then 
\[   \PoA \le B(R_1,\ldots,R_{k-1}) \assign \frac{4 P^2}{3 P^2 + 1} ,\]
where $P := \prod_{1 \leq i < k} (1 + R_i)$. 
\end{lemma}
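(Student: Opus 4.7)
The plan is to combine the variational inequality for Nash flows in parallel-link networks with the just-proved arithmetic inequality, applying the latter termwise.

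First I will use the fact that in a Nash flow $f^N$ on parallel links, all used links carry the common latency $L = C_N(r)/r$ and unused links have latency at least $L$. Consequently, for any feasible flow $g$ of rate $r$,
\[
  \sum_i g_i \, \ell_i(f^N_i) \;\ge\; \sum_i g_i \, L \;=\; r L \;=\; C_N(r).
\]
Applying this with $g = f^*$, the optimal flow at the same rate, and expanding the affine latency $\ell_i(x)=a_i x + b_i$, I get
\[
  C_N \;\le\; \sum_i f^*_i \ell_i(f^N_i) \;=\; \sum_i a_i f^*_i f^N_i + \sum_i b_i f^*_i .
\]

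Next I invoke Lemma~\ref{lem:arithmetic-inequality} on each term. By hypothesis $f^N_i / f^*_i \le \mu$ (with $1 \le \mu \le 2$; note that $f^N_i/f^*_i \le 1$ whenever $f^*_i > 0 = f^N_i$, and the case $f^*_i = 0$ is vacuous because a standard monotonicity argument for parallel links shows Opt uses at least as many links as Nash). Setting $\alpha = f^N_i$, $\beta = f^*_i$, the lemma yields
\[
  a_i f^*_i f^N_i \;\le\; \frac{\mu-1}{\mu^2}\, a_i (f^N_i)^2 + a_i (f^*_i)^2 .
\]
Summing over $i$, using $\sum_i a_i (f^*_i)^2 + \sum_i b_i f^*_i = C_{\mathit{opt}}(r)$, and using $\sum_i a_i (f^N_i)^2 \le \sum_i f^N_i \ell_i(f^N_i) = C_N(r)$ (since $b_i \ge 0$), I obtain
\[
  C_N \;\le\; \frac{\mu-1}{\mu^2}\, C_N + C_{\mathit{opt}} .
\]
Rearranging gives $\bigl(\mu^2-\mu+1\bigr) C_N \le \mu^2\, C_{\mathit{opt}}$, i.e., $\PoA \le \mu^2/(\mu^2-\mu+1)$, which is the first claim.

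For the second claim, Lemma~\ref{lem:upper_bound_ratio_flows} supplies the bound $f^N_i / f^*_i \le 2P/(P+1)$ under the hypothesis $\lambda_{j+1} \le R_j \Lambda_j$ for all $j$, so I may take $\mu = 2P/(P+1)$, which lies in $[1,2]$. A direct computation gives
\[
  \mu^2 - \mu + 1 \;=\; \frac{4P^2 - 2P(P+1) + (P+1)^2}{(P+1)^2} \;=\; \frac{3P^2+1}{(P+1)^2},
\]
so $\mu^2/(\mu^2-\mu+1) = 4P^2/(3P^2+1) = B(R_1,\ldots,R_{k-1})$. The main obstacle is mostly bookkeeping, namely making sure the variational inequality is used in the correct direction and that applying Lemma~\ref{lem:arithmetic-inequality} termwise produces a $C_N$ on the right-hand side that can then be absorbed to the left; the arithmetic identity $\mu^2-\mu+1 = (3P^2+1)/(P+1)^2$ is the only nontrivial algebraic step.
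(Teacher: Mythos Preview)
Your proof is correct and follows essentially the same route as the paper's: both start from the variational inequality $C_N(r)=Lr=\sum_i L f_i^*\le\sum_i(a_i f_i^N+b_i)f_i^*$, apply Lemma~\ref{lem:arithmetic-inequality} termwise to the cross terms $a_i f_i^* f_i^N$, and then absorb $\sum_i a_i (f_i^N)^2\le C_N(r)$ to reach $C_N\le\frac{\mu-1}{\mu^2}C_N+C_{\mathit{opt}}$. The second claim is handled identically by substituting $\mu=2P/(P+1)$ from Lemma~\ref{lem:upper_bound_ratio_flows}; your explicit verification of the identity $\mu^2-\mu+1=(3P^2+1)/(P+1)^2$ and your remark on the $f_i^*=0$ edge case are welcome clarifications but do not change the argument.
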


\begin{proof} 
Assume that Nash uses $j$ links and
let $L$ be the common latency of the links used by Nash. Then $L = a_i f_i^N +
b_i$ for $i \le j$ and $L \le b_{i} = a_i f_i^N + b_i $ for $i > j$. Thus, 
by use of Lemma~\ref{lem:arithmetic-inequality},
\begin{align*}
\CN(r) &= Lr = \sum_i L f_i^* \le \sum_i \left(a_i f_i^N + b_i\right) f_i^* 
      \le \frac{\mu-1}{\mu^2}\sum_i a_i (f_i^N)^2 + \sum_i \left(a_i (f_i^*)^2 +
b_i f_i^*\right)\\
&\le \frac{\mu-1}{\mu^2} \CN(r) + \Copt(r) \end{align*}
and hence $\PoA\leq {\mu^2}/(\mu^2-\mu+1)$. If $\lambda_{j+1} \le R_j
\Lambda_j$ for all $j$, employing
Lemma~\ref{lem:upper_bound_ratio_flows}, we may use $\mu = 2P/(P + 1)$
and obtain $\PoA \le 4P^2/(3P^2 + 1)$. 
 \end{proof}

\ignore{The next lemma gives an upper bound on the Nash and optimal flows
across a link, as functions of the links' efficiencies.

\begin{lemma}\label{lem:upper_bound_ratio_flows}
Let $j,k$ be the number of links that Nash and Opt are using
respectively. Then for every link $i\leq j$ it holds $$\frac{f_i^N}{f_i^*}\leq\frac{2\Lambda_k}{\Lambda_i+\Lambda_k}.$$
\end{lemma}

\begin{proof}
Recall first, that we can write the Nash and the optimal flow through link
$i$ as follows
\begin{align*}
f_i^N &= r \lambda_i/\Lambda_j + (\Gamma_j \lambda_i
/\Lambda_j - \gamma_i) &\text{for $1 \le i \le j$}\\
f_i^* &= r \lambda_i/\Lambda_k + (\Gamma_k \lambda_i
/\Lambda_k - \gamma_i)/2 &\text{for $1 \le i \le k$}.
\end{align*}

Therefore their ratio as a function of $r$ is
\begin{align*}
F(r)=\frac{f_i^N}{f_i^*} &= \frac{\Lambda_k}{\Lambda_j}\cdot \frac{2r+2\Gamma_j-2b_i\Lambda_j}{2r+\Gamma_k-b_i\Lambda_k}\\
\end{align*}

The sign of the derivative $F'(r)$ is equal to the sign of
$\Gamma_k-b_i\Lambda_k-2\Gamma_j+2b_i\Lambda_j$. If the sign is
nonnegative than $F$ is  non-decreasing with respect
to $r$, and therefore $F(r)\leq F(r_{j+1})$ or

\begin{align*}
\frac{f_i^N}{f_i^*} &\leq \frac{\Lambda_k}{\Lambda_j}\cdot
\frac{2r_{j+1}+2\Gamma_j-2b_i\Lambda_j}{2r_{j+1}+\Gamma_k-b_i\Lambda_k}\\
& = \frac{\Lambda_k}{\Lambda_j}\cdot
\frac{2(b_{j+1}-b_i)\Lambda_j}{2b_{j+1}\Lambda_{j+1}-2\Gamma_{j+1}+\Gamma_k-b_i\Lambda_k}\\
& = \frac{2(b_{j+1}-b_i)\Lambda_k}{\sum_{h\leq
    j+1}(2b_{j+1}-2b_h)\lambda_h+\sum_{h\leq k}(b_h-b_i)\lambda_h}\\
& = \frac{2(b_{j+1}-b_i)\Lambda_k}{\sum_{h\leq
    j}(2b_{j+1}-b_h-b_i)\lambda_h+\sum_{j<h\leq
    k}(b_h-b_i)\lambda_h}\\
& = \frac{2(b_{j+1}-b_i)\Lambda_k}{\sum_{h\leq
    i}(2b_{j+1}-b_h-b_i)\lambda_h+\sum_{i<h\leq
    j}(2b_{j+1}-b_h-b_i)\lambda_h+\sum_{j<h\leq
    k}(b_h-b_i)\lambda_h}\\
& \leq \frac{2(b_{j+1}-b_i)\Lambda_k}{\sum_{h\leq
    i}2(b_{j+1}-b_i)\lambda_h+\sum_{i<h\leq
    k}(b_{j+1}-b_i)\lambda_h}\\
& = \frac{2\Lambda_k}{\sum_{h\leq
    i}2\lambda_h+\sum_{i<h\leq k}\lambda_h}\\
& = \frac{2\Lambda_k}{\Lambda_i+\Lambda_k}\\
\end{align*}

If the sign is
negative than $F$ is decreasing with respect
to $r$, and therefore $F(r)\leq F(r_j)$ or

\begin{align*}
\frac{f_i^N}{f_i^*} &\leq \frac{\Lambda_k}{\Lambda_j}\cdot
\frac{2r_{j}+2\Gamma_j-2b_i\Lambda_j}{2r_{j}+\Gamma_k-b_i\Lambda_k}\\
& = \frac{\Lambda_k}{\Lambda_j}\cdot
\frac{2(b_{j}-b_i)\Lambda_j}{2b_{j}\Lambda_{j}-2\Gamma_{j}+\Gamma_k-b_i\Lambda_k}\\
& = \frac{2(b_{j}-b_i)\Lambda_k}{\sum_{h\leq
    j}(2b_{j}-2b_h)\lambda_h+\sum_{h\leq k}(b_h-b_i)\lambda_h}\\
& = \frac{2(b_{j}-b_i)\Lambda_k}{\sum_{h\leq
    j}(2b_{j}-b_h-b_i)\lambda_h+\sum_{j<h\leq
    k}(b_h-b_i)\lambda_h}\\
& = \frac{2(b_{j}-b_i)\Lambda_k}{\sum_{h\leq
    i}(2b_{j}-b_h-b_i)\lambda_h+\sum_{i<h\leq
    j}(2b_{j}-b_h-b_i)\lambda_h+\sum_{j<h\leq
    k}(b_h-b_i)\lambda_h}\\
& \leq \frac{2(b_{j}-b_i)\Lambda_k}{\sum_{h\leq
    i}2(b_{j}-b_i)\lambda_h+\sum_{i<h\leq
    k}(b_{j}-b_i)\lambda_h}\\
& = \frac{2\Lambda_k}{\sum_{h\leq
    i}2\lambda_h+\sum_{i<h\leq k}\lambda_h}\\
& = \frac{2\Lambda_k}{\Lambda_i+\Lambda_k}\\
\end{align*}


  \end{proof}

\begin{lemma}\label{lem:arithmetic-inequality}
For any reals $\mu$ and $\alpha$, and $\beta$ with $1 \le \mu \le 2$ and
$\alpha/\beta \le \mu$, $\beta \alpha \le \frac{\mu - 1}{\mu^2} \alpha^2 +
\beta^2$
\end{lemma}
\begin{proof} 
We may assume $\beta \ge 0$. If $\beta = 0$, there is nothing to show. 
So assume $\beta > 0$ and let $\alpha/\beta = \delta \mu$ for some $\delta
\le 1$. We need to show (divide the target inequality by $\beta^2$) 
$\delta \mu \le (\mu - 1) \delta + 1$ or equivalently $\mu \delta (1 - \delta)
\le (1 - \delta) (1 + \delta)$. This inequality holds for $\delta \le 1$ and
$\mu \le 2$.
   \end{proof}

--------------------

\begin{lemma}\label{lem:arithmetic-inequality-old}
For any reals $\alpha,\beta,\mu$ such that $\alpha/\beta\leq \mu$, and
$\mu\geq 1$ it holds
$$\beta\alpha\leq \frac{\mu-1}{\mu^2}\alpha^2 + \beta^2.$$
\end{lemma}

\begin{proof}
What we need to show is an inequality of the form 

\begin{equation}\label{eq:arithmetic-inequality}
\beta\alpha\leq k\alpha^2 + \beta^2,
\end{equation}

for some $0\leq k$. This would
eventually lead to a PoA of $1/(1-k)$. In particular for $k=1/4$, the
PoA becomes exactly 4/3.

If $\beta=0,$ then for any $k>0$, (\ref{eq:arithmetic-inequality})
obviously holds. If $\beta>0,$ then by dividing both terms by $\beta^2$, and setting
$x=\alpha/\beta$ we getting

$$f(x)=kx^2-x+1\geq 0.$$ 

$f(x)$ has two roots $\rho_1={\frac {1-\sqrt {1-4\,k}}{2k}}$, and $\rho_2={\frac {1+\sqrt {1-4\,k}}{2k}}$, with
$\rho_1\leq \rho_2$, and if $x\leq \rho_1$ then
$f(x)\geq 0$.

So by just solving the equation $\rho_1=\mu$ we get that $k=\frac{\mu-1}{\mu^2}$.

  \end{proof}

\begin{remark}
If $\alpha,\beta$ are arbitrary, by choosing $k=1/4$ we get that the
inequality holds for any ratio $\alpha/\beta$. Note that $\alpha$ will
correspond to the Nash flow $f_e^N$ through an edge $e$ and $\beta$
corresponds to the optimal flow $f_e^*$. For parallel links we can
take advantage of the fact that $f_e^*,f_e^N$ can not be arbitrary,
but there ratio is bounded and strictly less than 2. Then we can plug this dependence on
proving a tighter inequality in
Lemma~\ref{lem:arithmetic-inequality}. Note further that the
inequalities give us indications about lower bounds as well. For example for $k=1/4$,
the ratio of the flows $f_e^N/f_e^*=2$ and actually this is what
happens in the Pigou Example.
\end{remark}

\begin{lemma}\label{lem:upper-bound-PoA}
Let $f^N_i/f_i^*\leq \mu$, with $\mu\geq 1$, then $$PoA\leq \frac{\mu^2}{\mu^2-\mu+1}.$$
\end{lemma}

\begin{proof}
Since $f$ is at Nash equilibrium, we can derive the following variational inequality~\cite{BMW56} 
$$\sum_e\ell_e(f_e)f_e\leq \sum_e\ell_e(f_e)f_e^* \text{ or equivalently }
\sum_e\left(a_ef_e^2+b_ef_e\right)\leq \sum_e\left(a_ef_ef_e^*+b_ef_e^*\right).$$ By
using Lemma~\ref{lem:arithmetic-inequality} we get
$$\sum_ea_ef_e^2+b_ef_e\leq
\frac{\mu-1}{\mu^2}\sum_ea_ef^2_e+\sum_ea_e{f^*_e}^2+b_ef_e^*,$$
and the theorem follows.
   \end{proof}

Let's denote $S_i=\Lambda_k/\Lambda_i$, and $S=\max_{i\leq j}S_i$. 
Then by Lemma~\ref{lem:upper_bound_ratio_flows} we get that
$f^N_i/f_i^*\leq 2S/(S+1)$, and therefore by
Lemma~\ref{lem:upper-bound-PoA} we obtain 

\begin{corollary}
$$PoA\leq \frac{4S^2}{3S^2+1}.$$
\end{corollary}

In the benign case $\Lambda_{j+1} = \lambda_j + \Lambda_j \le (1 + R_j) \Lambda_j$ and hence

\[   \Lambda_{k}   \le  P := \prod_{1 \le i < k} (1 + R_i)  \Lambda_1 \]

and hence

\[    \PoA   \le   \frac{4 P^2}{3 P^2 + 1}  \]

in the benign case.

}\ignore{For $r \le r_{k-1}/2$,
the PoA is bounded by $B(R_1,\ldots,R_{k-2})$. For $r \ge r_{k-1}/2$, we 
consider two subcases: $r \ge
r_k$ and $r < r_k$. In the former case, Nash and Opt both use $k$ links and PoA
is decreasing. In the latter case, Nash uses strictly less links than Opt. 
\ignore{
}
Correa at al.~\cite{Correa-et-al} gave a geometric argument for the $4/3$ bound on the
Price of Anarchy. We adopt their proof to our situation. 
Opt uses $k$ links and Nash uses $j$ links for
some $j < k$. Then $\ell_i(f_i^N) = a_i f_i^N + b_i$ is constant (call this
constant $L$) for $i \le
j$ and is smaller than $b_{j+1}$. Hence $\CN(r) = L r = L \cdot \left(\sum_{i
\le k} f_i^*\right)$ and, for $i > j$, $(\ell_i(f_i^*) - L)f_i^* = (a_i f_i^* +
b_i - L)f_i^* \le a_i(f_i^*)^2$. We can now write~\footnote{We use
$\sum_{j < i \le k} (f_i^*)^2 \ge (\sum_{j < i \le k} f_i^*)^2/(k-j)^2 
= (\sum_{i \le j} f_i^N  - \sum_{i \le j} f_i^*)^2/(k-j)^2 
\ge \sum_{i \le j} (f_i^N  - f_i^*)^2/(k-j)^2$. }
\begin{align*}
\Copt(r) &= \sum_{1 \le i \le k} \ell_i(f_i^*) f_i^*
           = \CN(r) + \sum_{1 \le i \le k} (\ell_i(f_i^*) - L)
f_i^*\\
&\ge \CN(r) + \sum_{i \le j} a_i (f_i^* - f_i^N) f_i^* + \sum_{j < i \le k} a_i (f_i^*)^2\\
& \ge \CN(r) + \sum_{i \le j} a_i (f_i^* - f_i^N) f_i^* +
\frac{\min(a_{j+1},\ldots,a_k)}{(k - j)^2} \sum_{i \le j} (f_i^N  - f_i^*)^2.
\end{align*}
Let $I$ be the set of indices $i$ such that $i \le j$ and $f_i^* < f_i^N$ and 
let $I^*$ be the indices $i \in I$ with $\lambda_i \ge d_{k} \Lambda_{i-1}$,
where $d_{k} = 10 k$; 
$1$ belongs to $I^*$ since $\Lambda_0 = 0$. Let 
\[   t^* = \min_{i \in I^*} \frac{\min(a_{j+1},\ldots,a_k)}{a_i (k - j)^2} =
\min_{i \in I^*}
\frac{\min(1/\lambda_{j+1},\ldots,1/\lambda_k) \lambda_i}{(k - j)^2}. \]
For $k = 3$, we can use $d_3 = 8$; this choice will become clear below. 
Then $t^*\ge 1/(R_2\max\{8,4 + 4R_1\})$. 
Observe that $\lambda_2 \le
R_1 \lambda_1$ and $\lambda_3 \le
R_2 \Lambda_2 \le R_2(1 + R_1) \lambda_1$. 

\begin{lemma} $I = \sset{1,\ldots,h}$ for some $h \le j$ and $t^* \ge
\epsilon = \frac{d}{(1 + R_1)\cdots (1 + R_{k - 1})
k^2 (d+1)}$.\end{lemma}
\ignore{\begin{proof} We have (Lemma~\ref{thm: basic facts}) 
\begin{align*}
f_i^* &= r \lambda_i/\Lambda_k + (\Gamma_k \lambda_i
/\Lambda_k - \gamma_i)/2 &\text{for $1 \le i \le k$}\\
f_i^N &= r \lambda_i/\Lambda_j + (\Gamma_j \lambda_i
/\Lambda_j - \gamma_i) &\text{for $1 \le i \le j$}
\end{align*}
and hence $f_i^N \ge f_i^*$ if and only if 
\[ \lambda_i \left( \frac{r + \Gamma_j}{\Lambda_j} - \frac{r +
\Gamma_k/2}{\Lambda_k} - \frac{b_i}{2}\right) \ge 0 . \]
Since the $b_i$'s are increasing, this is true for an initial segment of $i$'s. 
  \end{proof}

\begin{lemma} 
\[t^* \ge \epsilon = \frac{d}{(1 + R_1)\cdots (1 + R_{k - 1})
k^2 (d+1)} .\]
\end{lemma}
\begin{proof}
For $h \ge j+1 > i$, 
\[ \lambda_h \le R_{h-1} \Lambda_{h-1} \le R_{h-1}(1 + R_{h-2})\Lambda_{h-2}
\le \ldots \le R_{h-1} \prod_{i \le \ell \le h-2} (1 + R_\ell) \Lambda_i \le
P \Lambda_i,\]
where $P = \prod_i (1 + R_i)$. Thus $t^* \ge {d}/(P k^2 (d+1))$                       
since $\lambda_i \ge d \Lambda_{i-1} = d(\Lambda_i - \lambda_i)$ and hence
$\lambda_i \ge d \Lambda_i/(d+1)$ for $i \in I^*$.
  \end{proof}}
We next weaken the lower bound for $\Copt(r)$ by dropping all
terms with $i \not\in I$ and dropping the quadratic term for all $i \not \in
I^*$. We obtain 
\[
\Copt(r) \ge \CN(r) + \sum_{i \in I \setminus I^*} a_i (f_i^* - f_i^N) f_i^*
+ \sum_{i \in I^*} a_i \left((f_i^* - f_i^N) f_i^* + t^* (f_i^N -
f_i^*)^2\right). \] 
The following Lemma generalizes the geometric argument in~\cite{Correa-et-al}.

\begin{lemma}\label{lem: useful1} 
Let $t \ge 0$. The function $x \mapsto (x - r)x + t(r - x)^2$ is minimized for
$x = (1 + 2t)r/(2 + 2t)$. It then has value $-r^2/(4(1 + t))$.\end{lemma}

For $i \in I \setminus I^*$, we apply the Lemma with $t = 0$ and for $i \in I^*$
we apply the Lemma with $t = t^*$. Thus
\[
\Copt(r) \ge \CN(r) - \sum_{i \in I \setminus I^*} \frac{1}{4}a_i (f_i^N)^2
 - \sum_{i \in I^*} a_i \frac{1}{4 + 4t^*} (f_i^N)^2. \] 

We complete the proof by showing that the total flow across the edges in $I -
I^*$ is small compared to the flow across the edges in $I^*$. 

\begin{lemma} If $i^* \in I^*$ and $i^* +1, \ldots, i^* + h \not\in I^*$,
$f^N_{i^* + 1} + \ldots + f^N_{i^* + h} \le f^N_{i^*}/8$. 
\end{lemma}
\ignore{
\begin{proof} If $i^* + \ell \not\in i^*$ then $\lambda_{i^* + \ell} \le d
\Lambda_{i^* + \ell - 1}$ and hence  $\Lambda_{i^* + \ell} \le (1 + 1/d)
\Lambda_{i^* + \ell - 1}$. Thus $\Lambda_{i^* + h} \le (1 + 1/d)^h
\Lambda_{i^*}$ and hence 
\[ \Lambda_{i^* + h} - \Lambda_{i^*} \le \left((1 + 1/d)^h - 1\right) \Lambda_{i^*}
    \le \frac{d+1}{d} \left((1 + 1/d)^h - 1\right)\lambda_{i^*}.\]
Finally observe that
\[   \frac{d+1}{d} \left((1 + 1/d)^h - 1\right) \le \frac{d+1}{d} \left(e^{h/d}
- 1\right) \le \frac{d+1}{10d} \le \frac{1}{8}.  \] 

In a Nash flow all links have the same delay, say $L$. Therefore, we have
$f_i^N = (L - b_i)\lambda_i$. Thus
\[ f^N_{i^* + 1} + \ldots + f^N_{i^* + h} = \sum_{1 \le i - i^* \le h} (L -
b_i)\lambda_i \le (L - b_{i^*})\lambda_{i^*}/8  =
\frac{1}{8} f_{i^*}^N. \]
  \end{proof}}
For $k = 3$, we either have $I^* = \sset{1,\ldots,j}$ or $j = 2$, $I^* =
\sset{1}$, $I = \sset{1,2}$, and $\lambda_2 \le \lambda_1/8$ and hence $f_2^N \le
r/8$. The choice $d_3 = 8$ is dictated by the fact that we want $\lambda_2 \le
\lambda_1/8$ if $2 \in I \setminus I^*$. 

\begin{lemma} $\Copt(r) \ge \left(\frac{3 + 7t^*/2}{4 + 4t^*}\right) \CN(r)$
and $\ePoA(R_1,\ldots,R_{k-1} \le \max_r \frac{CN(r)}{\Copt(r)} \le \frac{4 + 4t^*}{3 + 7 t^*/2} < 4/3$. \end{lemma}
\begin{proof} If $I = I^*$, the claim is obvious. Assume otherwise and $k =
3$. Observe that $a_i (f_i^N)^2 \le L f_i^N$ for $i \le j$ and hence 
\begin{align*}
\Copt(r) &\ge \CN(r) - \frac{1}{4} L f_2^N 
 - \frac{1}{4 + 4t^*} L f_1^N 
\ge \CN(r)  - \frac{1}{4 + 4t^*} L (f_1^N + f_2^N)- \frac{4t^*}{4 + 4t^*} L f_2^N\\
&= \CN(r) - \frac{1 + t^*/2}{4 + 4t^*} L r
= \left(\frac{3 + 7t^*/2}{4 + 4t^*}\right) \CN(r)
\end{align*}\vspace{\proofnegpostskip}\par
  \end{proof}
For $k = 3$, we obtain $\ePoA  \le B(R_1,R_2) = (8 R_2 \max\{8,4 + 4R_1\} + 4)/(6 R_2\max\{8,4 +
4R_1\| + 7)$. 
}
\paragraph{The General Case:} We come to the case where $\lambda_{i+1} \ge R_{i}
\Lambda_{i}$ for some $i$. Let $j$ be the smallest such $i$. 
For $r \le r_{j+1}/2$, MN and Opt use only links $1$ to $j$ and we are in the
benign case. Hence $e\PoA$ is bounded by $B(R_1,\ldots,R_{j-1}) < 4/3$. 
Assume now that $r>r_{j+1/2}$. MN
routes the flow exceeding $r_{j+1}/2$ exclusively on higher links. 

\begin{lemma} \label{lem:MNbeforeOPT} MN does not use links before Opt. \end{lemma}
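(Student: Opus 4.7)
The plan is to reduce the lemma to showing that for every link $i$, the infimum $\rho_i$ of rates at which MN carries positive flow on link $i$ satisfies $\rho_i \ge r_i/2$. Since Theorem~\ref{thm: basic facts}(c) shows that Opt begins to use link $i$ exactly at $r_i/2$, this pointwise inclusion of supports is equivalent to the lemma.

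I would first pin down MN's regime structure. Let $j_1+1<\dots<j_t+1$ be the super-efficient indices (with the conventions $r_{j_0+1}/2:=0$ and $r_{j_{t+1}+1}/2:=\infty$). MN splits the rate axis into regimes $[r_{j_{i'}+1}/2,r_{j_{i'+1}+1}/2]$; inside regime $i'$, the flow on links $1,\dots,j_{i'}$ is frozen at its Nash value at rate $r_{j_{i'}+1}/2$, and the residual $r-r_{j_{i'}+1}/2$ is routed on $\{j_{i'}+1,\dots,k\}$ as the Nash flow of that sub-network using the original latencies (in particular, regime $0$ is plain Nash on all $k$ links). I would then locate the earliest regime that activates link $i$. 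If $i\le j_1$, MN equals Nash in regime $0$ and $\rho_i=r_i\ge r_i/2$, or $\rho_i=\infty$ if even Nash has not reached $i$ by the end of regime $0$, in which case $i$'s frozen value stays $0$ forever. If $i=j_{i'}+1$ for some $i'\ge 1$, MN starts routing on $i$ precisely at the transition, giving $\rho_i=r_{j_{i'}+1}/2=r_i/2$. Otherwise $i$ lies strictly between consecutive super-efficient indices (or above $j_t+1$), and MN can only activate $i$ inside some regime $i'$ with $j_{i'}+1<i$, via the sub-Nash on $\{j_{i'}+1,\dots,k\}$, which gives $\rho_i\ge r_{j_{i'}+1}/2+R_i^{(i')}$, where $R_i^{(i')}:=\sum_{j_{i'}<g<i}(b_i-b_g)\lambda_g$ is the Nash activation threshold for $i$ in that sub-network.

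The core algebraic step is the inequality $r_{j_{i'}+1}/2+R_i^{(i')}\ge r_i/2$ in this last case. Theorem~\ref{thm: basic facts}(a) gives $r_i - R_i^{(i')} = b_i\Lambda_{j_{i'}}-\Gamma_{j_{i'}}$ and $r_{j_{i'}+1}=b_{j_{i'}+1}\Lambda_{j_{i'}}-\Gamma_{j_{i'}}$, so the target difference collapses to $\bigl(R_i^{(i')}-(b_i-b_{j_{i'}+1})\Lambda_{j_{i'}}\bigr)/2$. Retaining only the $g=j_{i'}+1$ term of $R_i^{(i')}$ yields $R_i^{(i')}\ge (b_i-b_{j_{i'}+1})\lambda_{j_{i'}+1}$, and the super-efficiency of link $j_{i'}+1$ gives $\lambda_{j_{i'}+1}>R_{j_{i'}}\Lambda_{j_{i'}}\ge 2\Lambda_{j_{i'}}$, making the difference strictly positive. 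The main obstacle is being precise about MN's regime structure — in particular, verifying that within each regime the residual flow is distributed as ordinary Nash on the sub-network (so the activation threshold for $i$ is $R_i^{(i')}$, not the weaker $R_i^{(i')}/2$ that a naive recursive reapplication of the whole MN machinery might suggest). Once this and the harmless frozen-at-zero case are handled, the lemma reduces to the single inequality above.
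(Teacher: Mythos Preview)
Your argument is correct and rests on the same inequality as the paper's (super-efficiency gives $\lambda_{j_{i'}+1}>2\Lambda_{j_{i'}}$, equivalently $\Lambda_i-\Lambda_{j_{i'}}\ge\Lambda_i/2$ for $i>j_{i'}$), so the two proofs are essentially the same. The paper's version is considerably shorter because it writes both MN's activation rate and $r_h/2$ in the telescoping form $\sum_{j<i<h}(b_{i+1}-b_i)\cdot(\text{something})$ and compares termwise, collapsing your algebraic detour through $r_i-R_i^{(i')}$ into a one-line inequality; it also treats only the first super-efficient index explicitly, whereas you spell out the full regime decomposition and the frozen-at-zero corner case, which the paper leaves implicit.
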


\begin{proof} 
This is trivially true for the $j+1$-st link. Consider any $h > j+1$. MN starts to use link $h$ at 
$s_h = r_{j+1}/2 + \sum_{j+1 \le i < h}(b_{i+1} - b_i) (\Lambda_i -
\Lambda_j)$
and Opt starts to use it at $r_h/2 = r_{j+1}/2 + \sum_{j+1 \le i < h}(b_{i+1} -
b_i)\Lambda_i/2$. We have $s_h \ge r_h/2$ since $\Lambda_i - \Lambda_j \ge
\Lambda_i/2$ for $i > j$.  
 \end{proof}

We need to bound the cost of MN in terms of the cost of Opt. In order to do so,
we introduce an intermediate flow Mopt (modified optimum) that we can readily relate to MN and to
Opt. Mopt uses links $1$ to $j$ to route
$r_{j+1}/2$ and routes $f = r- r_{j+1}/2$ optimally across links $j+1$ to
$k$. Let $f_i^*$ and $f_i^m$ be the
optimal flows and the flows of Mopt, respectively, at rate $r$.
Let $r_s = \sum_{i \le j} f_i^* \ge r_{j+1}/2$ be the total flow routed across the first $j$
links in the optimal flow (the subscript $s$ stands for small) and let
\[    t  = \frac{r - r_{j+1}/2}{r - r_s}\]
We will show $t \le 1 + 1/R_j$ below. We next relate the cost of Mopt on links
$j+1$ to $k$ to the cost of Opt on these links. To this end we scale the
optimal flow on these links by a factor of $t$, i.e., we consider 
the following flow across links $j+1$ to $k$: on link $i$, $j+1 \le i
\le k$, it routes $t \cdot f_i^*$. The total flow on the \emph{high} links,
i.e., links $j+1$ to $k$, is $r - r_{j+1}/2$ and hence Mopt incurs at most the
cost of this flow on its high links. Thus
\[   \sum_{i > j} \ell_i(f_i^m) f_i^m \le \sum_{i > j} \ell_i(t f_i^*)
t f_i^* \le  t^2 \left(\sum_{i > j} \ell_i(f_i^*)
f_i^*\right). \]
The cost of MN on the high links is at most $\ePoA(R_{j+1},\ldots,R_{k-1})$ times
this cost by the induction hypothesis. We can now bound the cost of MN as
follows: 
%
%
\begin{align*}
\CMN(r) &= \CN(r_{j+1}/2) + \CMN(\text{flow $f$ across links $j+1$ to $k$})\\
        &\le B(R_1,\ldots,R_{j-1}) \Copt(r_{j+1}/2) + t^2 \ePoA(R_{j+1},\ldots,R_{k-1})\left(\sum_{i > j} \ell_i(f_i^*)
f_i^*\right) \\
        &\le B(R_1,\ldots,R_{j-1})\left(\sum_{i \le j} \ell_i(f_i^*)
f_i^*\right) + t^2 \ePoA(R_{j+1},\ldots,R_{k-1})\left(\sum_{i > j} \ell_i(f_i^*)
f_i^*\right) \\
&\le \max\left\{B(R_1,\ldots,R_{j-1}),\, t^2 \ePoA(R_{j+1},\ldots,R_{k-1})\right\} \Copt(r)
\end{align*}


\begin{lemma} \label{lem:tbound} $t \le 1 + 1/R_{j}$,
where $j$ is the smallest $i$ for which $\lambda_{i+1} \ge R_{i} \Lambda_{i}$. 
\end{lemma}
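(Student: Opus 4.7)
The plan is to show that the share of additional flow absorbed by the ``low'' links (links $1$ through $j$) in the optimum is at most a $1/(1+R_j)$ fraction, which will translate directly into the claimed bound on $t$.

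First I would recall that since $r \ge r_{j+1}/2$, Opt uses at least $j+1$ links; say it uses $h$ links with $h \ge j+1$ at rate $r$. From Theorem~\ref{thm: basic facts}(c), in the regime where Opt uses $h$ links one has $f_i^*(r) = r\lambda_i/\Lambda_h + \delta_i/2$, so
\[
 r_s(r) \;=\; \sum_{i \le j} f_i^*(r) \;=\; \frac{\Lambda_j}{\Lambda_h}\,r \;+\; \mathrm{const}.
\]
Hence within each such regime, $dr_s/dr = \Lambda_j/\Lambda_h \le \Lambda_j/\Lambda_{j+1}$. By the definition of $j$, $\lambda_{j+1} \ge R_j \Lambda_j$, so $\Lambda_{j+1} \ge (1+R_j)\Lambda_j$, giving $dr_s/dr \le 1/(1+R_j)$ in every regime in which $h \ge j+1$.

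Next I would observe the boundary condition: at $r = r_{j+1}/2$, Opt uses exactly the first $j$ links, so $r_s(r_{j+1}/2) = r_{j+1}/2$. Integrating the bound on $dr_s/dr$ from $r_{j+1}/2$ up to $r$ (piecewise across regimes where Opt adds further links, in each of which the bound still holds) yields
\[
 r_s \;\le\; \frac{r_{j+1}}{2} + \frac{r - r_{j+1}/2}{1+R_j}.
\]
Consequently
\[
 r - r_s \;\ge\; (r - r_{j+1}/2)\left(1 - \frac{1}{1+R_j}\right) \;=\; \frac{R_j}{R_j+1}\,(r - r_{j+1}/2),
\]
and therefore
\[
 t \;=\; \frac{r - r_{j+1}/2}{r - r_s} \;\le\; \frac{R_j+1}{R_j} \;=\; 1 + \frac{1}{R_j},
\]
as desired. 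The only subtlety is to handle the fact that $h$ may change (Opt may add further links) as $r$ grows past $r_{j+1}/2$; but since the derivative bound $dr_s/dr \le \Lambda_j/\Lambda_h \le \Lambda_j/\Lambda_{j+1}$ holds in \emph{every} such regime, the integrated bound is unaffected. I expect this piecewise continuity argument, rather than any calculation, to be the only point that needs care.
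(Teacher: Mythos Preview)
Your proof is correct and takes a genuinely different, more conceptual route than the paper. The paper fixes the number $h$ of links used by Opt, writes $r = r_h/2 + \delta$, computes $r_s$ explicitly as $r_s = \Lambda_j\delta/\Lambda_h + \tfrac{1}{2}\big((b_h-b_j)\Lambda_j + r_j\big)$ via Theorem~\ref{thm: basic facts}(a), and then bounds the fractional-linear expression $t=t(\delta)$ by the maximum of its value at $\delta=0$ and the ratio of the leading coefficients; each of these is shown to be at most $1+1/R_j$ by a telescoping-sum calculation. Your argument replaces all of this with the single observation that $r_s$ is continuous and piecewise linear in $r$ with slope $\Lambda_j/\Lambda_h \le \Lambda_j/\Lambda_{j+1} \le 1/(1+R_j)$ in every regime $h \ge j+1$, together with the boundary value $r_s(r_{j+1}/2)=r_{j+1}/2$; the bound on $t$ then drops out immediately. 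What your approach buys is transparency and brevity: it makes clear that the only property driving the bound is $\lambda_{j+1}\ge R_j\Lambda_j$, and it sidesteps the explicit identification of $r_s^*$ and the subsequent algebra. The paper's approach, in exchange, gives an exact closed form for $r_s$ at each rate, which could be useful if one wanted sharper, $h$-dependent bounds.
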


\begin{proof} 
Assume that Opt uses $h$ links where $j+1 \le h \le k$. Then $r_h/2 \le r \le
r_{h+1}/2$. Let $r = r_h/2 + \delta$. According to Theorem~\ref{thm: basic facts}, $f_i^* = r
\lambda_i/\Lambda_h + (\Gamma_h \lambda_i/\Lambda_h - \gamma_i)/2$ and hence 
\[  r_s = \left(\frac{r_h}{2} + \delta \right) \frac{\Lambda_j}{\Lambda_h} + \frac{1}{2} \left(\frac{\Gamma_h
\Lambda_j}{\Lambda_h} - \Gamma_j\right).\]
Since $\Gamma_h + r_h = b_h\Lambda_h$ and  $\Gamma_j + r_j = b_j\Lambda_j$ (see
Theorem~\ref{thm: basic facts}), this
simplifies to 
\[ r_s = \frac{\Lambda_j \delta}{\Lambda_h} + \frac{b_h \Lambda_h - \Gamma_h}{2} \frac{\Lambda_j}{\Lambda_h} + \frac{1}{2} \left(\frac{\Gamma_h
\Lambda_j}{\Lambda_h} - b_j \Lambda_j + r_j \right) = \frac{\Lambda_j \delta}{\Lambda_h} + \frac{1}{2} \left((b_h - b_j)
\Lambda_j + r_j\right) = \frac{\Lambda_j \delta}{\Lambda_h} + r^*_s,\]
where $r^*_s = \frac{1}{2} \left((b_h - b_j)
\Lambda_j + r_j\right)$. We can now bound $t$. 
\[ t = \frac{r- r_{j+1}/2}{r - r_s} = \frac{r_h/2 + \delta - r_{j+1}/2}{r_h/2 +
\delta - r^*_s - \Lambda_j \delta /\Lambda_h} \le \max \left\{ \frac{r_h/2 -
r_{j+1}/2}{r_h/2 - r^*_s}, \frac{1}{1 -
\frac{\Lambda_j}{\Lambda_h}}\right\}.\]
Next observe that 
\begin{align*}
 \frac{r_h/2 -
r_{j+1}/2}{r_h/2 - r^*_s}& = \frac{\sum_{j+1 \le i < h} (b_{i+1} - b_i) \Lambda_i}{\left(\sum_{j \le i <
h} (b_{i+1} - b_i) \Lambda_i\right) - (b_h - b_j) \Lambda_j}
= \frac{\sum_{j+1 \le i < h} (b_{i+1} - b_i) \Lambda_i}{\sum_{j + 1\le i <
h} (b_{i+1} - b_i) (\Lambda_i - \Lambda_j)}\\
&\le \max_{j+1 \le i < h} \frac{\Lambda_i}{\Lambda_i - \Lambda_j}
= \frac{\Lambda_{j+1}}{\Lambda_{j+1}- \Lambda_j}
= \frac{\Lambda_{j} + \lambda_{j+1}}{\lambda_{j+1}}
\le 1 + \frac{1}{R_{j}}.
\end{align*}
The second term in the upper bound for $t$ is also bounded by this quantity. 
 \end{proof}

We summarize the discussion.

\begin{lemma}\label{thm: recurrence for nonbenign PoA}
For every $k$ and every $j$ with
$1 \le j < k$. If $\lambda_{j+1} > R_{j} \Lambda_j$ and 
$\lambda_{i+1} \le R_i \Lambda_i$ for $i < j$, then 
\[  \ePoA(R_1,\ldots,R_{k-1}) \le \max\left\{B(R_1,\ldots,R_{j-1}),\left(1 + 
\frac{1}{R_{j}}\right)^2\ePoA(R_{j+1},\ldots,R_{k-1})\right\}.\]
\end{lemma}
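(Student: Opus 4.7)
The plan is to split the analysis on the demand rate $r$ according to whether $r\le r_{j+1}/2$ or $r>r_{j+1}/2$; the threshold $r_{j+1}/2$ is exactly the point at which MN begins to use the super-efficient link $j+1$. The minimality of $j$ ensures $\lambda_{i+1}\le R_i\Lambda_i$ for every $i<j$, so the low portion of the network sits in the benign regime. The lemma then just packages together the two resulting bounds.

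\textbf{Small rates.} For $r\le r_{j+1}/2$, MN coincides with Nash on the subnetwork of links $1,\ldots,j$, and Opt also only uses these links (since Opt starts using link $j+1$ at rate $r_{j+1}/2$). The subnetwork satisfies the benign condition through index $j-1$, so Lemma~\ref{lem:upper-bound-PoA1} (applied to this subnetwork) yields $\CMN(r)/\Copt(r)\le B(R_1,\ldots,R_{j-1})$.

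\textbf{Large rates.} For $r>r_{j+1}/2$ I would use the intermediate flow Mopt already introduced: freeze the first $j$ links at the MN values (which carry total $r_{j+1}/2$ and exactly match Opt at that rate, contributing cost $\Copt(r_{j+1}/2)$), and route the residual demand $f=r-r_{j+1}/2$ optimally on links $j+1,\ldots,k$. Lemma~\ref{lem:MNbeforeOPT} is what makes this well-posed: since MN does not activate any high link before Opt does, MN on the high links can be analyzed as a $(k-j)$-link instance with demand $f$, to which the induction hypothesis applies and gives an $\ePoA(R_{j+1},\ldots,R_{k-1})$ factor over Mopt's high-link cost. To compare Mopt's high-link cost to Opt's, I would use the scaling trick: setting $r_s=\sum_{i\le j}f_i^*$ and $t=(r-r_{j+1}/2)/(r-r_s)$, the scaled flow $tf_i^*$ on each high link $i$ is a feasible routing of the demand $f$, and by the affine structure $\sum_{i>j}\ell_i(tf_i^*)(tf_i^*)\le t^2\sum_{i>j}\ell_i(f_i^*)f_i^*$. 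Combining the three pieces gives
\[\CMN(r)\le B(R_1,\ldots,R_{j-1})\sum_{i\le j}\ell_i(f_i^*)f_i^*+t^2\,\ePoA(R_{j+1},\ldots,R_{k-1})\sum_{i>j}\ell_i(f_i^*)f_i^*,\]
which is at most $\max\{B(R_1,\ldots,R_{j-1}),\,t^2\,\ePoA(R_{j+1},\ldots,R_{k-1})\}\cdot\Copt(r)$. Finally, Lemma~\ref{lem:tbound} replaces $t$ by $1+1/R_j$, yielding the stated recurrence.

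\textbf{Main obstacle.} The delicate step is the scaling comparison between Mopt and Opt on the high links. On the one hand, Mopt must route a strictly larger volume on the high links than Opt does (because Opt sheds part of the demand onto the low links beyond the frozen $r_{j+1}/2$); on the other hand, bounding the resulting blow-up requires controlling $t$, which is where the super-efficiency hypothesis $\lambda_{j+1}>R_j\Lambda_j$ is used essentially. Lemma~\ref{lem:tbound} already does this, but matching its output cleanly to the inductive bound on the high links — and ensuring that $\Copt(r_{j+1}/2)$ is legitimately absorbed into the low-link portion of $\Copt(r)$ even though Opt at rates $r_{j+1}/2$ and $r$ splits flow differently — is the point where all the earlier structural lemmas have to come together.
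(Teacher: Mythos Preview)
Your proposal is correct and follows essentially the same approach as the paper: the same case split at $r_{j+1}/2$, the same intermediate flow Mopt, the same scaling factor $t$ with the same $t^2$ comparison on the high links, the same invocation of Lemmas~\ref{lem:MNbeforeOPT} and~\ref{lem:tbound}, and the same final $\max$ combination. Your ``main obstacle'' paragraph even flags the one step the paper leaves implicit, namely $\Copt(r_{j+1}/2)\le\sum_{i\le j}\ell_i(f_i^*)f_i^*$; this holds because Opt's restriction to the first $j$ links at rate $r$ is a feasible routing of $r_s\ge r_{j+1}/2$ units there, and optimal cost is monotone in the rate.
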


We are now ready for our main theorem.

\begin{theorem}\label{thm:ePoAbound}
For any $k$, there is a choice of the parameters $R_1$ to
$R_{k-1}$ such that the engineered Price of Anarchy  with these
parameters is strictly less than $4/3$. \end{theorem}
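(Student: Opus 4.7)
The plan is to proceed by strong induction on $k$, leveraging the recurrence of Lemma~\ref{thm: recurrence for nonbenign PoA}. The base case $k=1$ is trivial since with a single link MN is Opt, so $\ePoA = 1 < 4/3$. For the inductive step, suppose that for every $k' < k$ there is a choice of parameters $R_1^{(k')},\dots,R_{k'-1}^{(k')}$ achieving a bound $c_{k'} < 4/3$ on the engineered PoA.

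For $k$ links, I would reuse a known good $(k-1)$-link choice by setting $R_{i+1} := R_i^{(k-1)}$ for $1 \le i \le k-2$, leaving $R_1$ as the only free parameter to be tuned. Applying the recurrence to any instance, we get
\[
\ePoA(R_1,\ldots,R_{k-1}) \le \max\Bigl\{B(R_1,\ldots,R_{k-1}),\; \max_{1 \le j \le k-1}\Bigl(1 + \tfrac{1}{R_j}\Bigr)^2 \ePoA(R_{j+1},\ldots,R_{k-1})\Bigr\},
\]
where I have used the monotonicity of $B(\cdot)$ in its arguments to absorb the prefix benign terms $B(R_1,\ldots,R_{j-1})$ into $B(R_1,\ldots,R_{k-1})$. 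For each $j \ge 2$, the suffix $(R_{j+1},\ldots,R_{k-1})$ equals $(R_j^{(k-1)},\ldots,R_{k-2}^{(k-1)})$, and the term $(1+1/R_{j-1}^{(k-1)})^2 \ePoA(R_j^{(k-1)},\ldots,R_{k-2}^{(k-1)})$ already appears in the recurrence bound for the $(k-1)$-link problem, so it is $\le c_{k-1}$. For $j=1$ we get $(1+1/R_1)^2 \ePoA(R_2,\ldots,R_{k-1}) \le (1+1/R_1)^2 c_{k-1}$. Since $c_{k-1} < 4/3$, one can choose $R_1$ large enough so that $(1+1/R_1)^2 c_{k-1} < 4/3$; explicitly any $R_1 > 1/\bigl(\sqrt{4/(3c_{k-1})}-1\bigr)$ (intersected with $R_1 \ge 2$) works. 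Finally, $B(R_1,\ldots,R_{k-1}) = 4P^2/(3P^2+1) < 4/3$ automatically for any finite product $P$. This defines a new bound $c_k < 4/3$ and closes the induction.

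The potential obstacle is a subtle one: the recurrence involves $\ePoA$ of every suffix of $(R_1,\ldots,R_{k-1})$, so a naive attempt to optimize all parameters simultaneously seems to require controlling the PoA on every suffix at once. The key observation that makes the induction clean is that a good $(k-1)$-link parameterization, being itself the output of the same recurrence, automatically has all its own suffixes under control; reusing it as $R_2,\ldots,R_{k-1}$ therefore handles every $j \ge 2$ term in one stroke, and only the single new parameter $R_1$ needs to be tuned against the already-established bound $c_{k-1}$. This also explains why the bound $c_k$ tends to $4/3$ as $k \to \infty$: each step multiplies by a factor $(1+1/R_1)^2$ that must strictly exceed $1$, so the construction does not yield a uniform bound below $4/3$, only a $k$-dependent one.
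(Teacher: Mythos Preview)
Your approach is essentially the paper's: both choose the parameters one at a time so that each new parameter is large enough to keep $(1+1/R_i)^2$ times the already-controlled suffix $\ePoA$ below $4/3$; the paper frames this as downward induction on $i$ for fixed $k$ (establishing $\ePoA(R_i,\ldots,R_{k-1})<4/3$ for $i=k,k-1,\ldots,1$), while you frame it as induction on $k$, reusing the $(k-1)$-link parameters as $R_2,\ldots,R_{k-1}$. These are the same construction under a relabeling.

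One point to tighten: your stated hypothesis only gives $\ePoA(R_1^{(k-1)},\ldots,R_{k-2}^{(k-1)})\le c_{k-1}$, which by itself does not bound the individual recurrence terms $(1+1/R_{j-1}^{(k-1)})^2\,\ePoA(R_j^{(k-1)},\ldots,R_{k-2}^{(k-1)})$; ``appearing in the recurrence bound'' tells you that term is at most the right-hand side of the recurrence, not at most $c_{k-1}$ as you defined it. The fix is exactly what your last paragraph identifies but does not formalize: strengthen the hypothesis to assert that $c_{k'}$ dominates every term in the $k'$-link recurrence (equivalently, every suffix of the chosen parameter vector has $\ePoA<4/3$). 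The paper's downward-on-$i$ phrasing makes this invariant explicit automatically, which is its only advantage over your formulation.
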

\begin{proof} We will show $\ePoA(R_i,\ldots,R_{k-1}) < 4/3$ by downward
  induction on $i$, i.e., we will define $R_{k-1}$, $R_{k-2}$, down to $R_1$ in this
  order. For $i = k$, we have $\ePoA() = 1 < 4/3$. 

  We now come to the induction step. We have already defined $R_{k-1}$
  down to
  $R_{i+1}$ and now define $R_i$. We have 
\[ 
\ePoA(R_i,\ldots,R_{k-1}) \le \max \left\{ \begin{array}{l} B(R_i,\ldots,R_{k-1}), \\
        \max_{j; i \le j < k} \left\{ B(R_i,\ldots,R_{j-1}),\left(1 + 
\frac{1}{R_{j}}\right)^2\ePoA(R_{j+1},\ldots,R_{k-1}) \right\}
\end{array}\right\}, \]
where the first line covers the benign case and the second line covers the
non-benign case (Lemma~\ref{thm: recurrence for nonbenign PoA}). We now fix
$R_i$. We have $B(R_i,\ldots,R_{k-1})< 4/3$ and
$B(R_i,\ldots,R_{j-1}) < 4/3$ for all $j$, $i \le j < k$ by Lemma~\ref{lem:upper-bound-PoA1} for any
choice of $R_i$. This completes the induction if the first line defines the
maximum. So assume that the second line defines the maximum. We only need to
deal with the case $j = i$ in the second line, as the case $j > i$ was already
dealt with for larger $i$. The case $j = i$ is handled by choosing $R_i$
sufficiently large, i.e., such that $(1 + 1/R_i)^2
\ePoA(R_{i+1},\ldots,R_{k-1}) < 4/3$. 
 \end{proof}

\begin{remark}Alternatively, we could take $i = k-1$ as the base case. Then, $j = i-1$
in the non-benign case and hence 
\begin{align*}
\ePoA(R_{k-1}) & = \max (B(R_{k-1}), B(), (1 + \frac{1}{R_{k-1}})^2 \ePoA() )\\
                         & = \max(\frac{4 (1 + R_{k-1})^2}{3 (1 + R_{k-1})^2 + 1}, (1 + \frac{1}{R_{k-1}})^2)
                         \end{align*}
where the bound on $B(R_{k-1})$ comes from Lemma 4. For $R_{k-1} = 7$, the
bound becomes $\max((4/3)(64/65),64/49) = \max(64/65,48/49)\cdot 4/3$. 
\end{remark}

\section{An Improved Mechanism for the Case of Two Links}
\label{sec: two links advanced}

In this section we present a mechanism which achieves $\npoa=1.192$ for a
network that consists of two parallel links. 
The ratio $\CN(r)/\Copt(r)$ is maximized at $r
= r_2$. At this rate Nash still uses only the first link and Opt uses both
links. In order to avoid this maximum ratio (if larger than 1.192), 
we force MN to use the second link earlier
by increasing the latency of the first link after some rate $x_1$, $r_2/2 \le x_1
\le r_2$ to a value above $b_2$. In the preceding section, we increased the
latency to $\infty$. In this way, we avoided a bad ratio at $r_2$, but paid a
price for very large rates. The idea for the improved construction, is to
increase the latency to a finite value. This will avoid the bad ratio, but also
allow MN to use both links for large rates. 
In particular, we obtain the following result.
\begin{theorem}\label{thm:two-links-advanced}
There is a mechanism for a network of two parallel links that achieves
$\npoa=1.192$.
\end{theorem}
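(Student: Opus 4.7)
The approach is to exhibit a concrete mechanism of the plateau form~(\ref{eq: refined}), applied only to the first link (the one with smaller $b$), and to verify by direct computation that a suitable choice of the plateau parameters keeps $\CMN(r)/\Copt(r)$ below $1.192$ for every demand rate $r$ and every instance. By an affine scaling of the latencies and of the rate, I may normalize so that $\ell_1(x)=x$ and $\ell_2(x)=x/\rho+1$, where $\rho=\lambda_2\ge 0$ is the only remaining instance parameter (the case $\rho\le 1$ is already handled by the simple mechanism of Section~\ref{sec: two links}, so I focus on $\rho>1$). The mechanism leaves $\hat\ell_2=\ell_2$ unchanged and, for parameters $r_1\le 1\le u_1$ to be chosen as functions of $\rho$, sets $\hat\ell_1(x)=x$ for $x\le r_1$ or $x\ge u_1$, and $\hat\ell_1(x)=u_1$ for $r_1<x<u_1$. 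Since the plateau value $u_1>1=b_2$, flow is pushed onto link~$2$ strictly earlier than in the unmodified Nash; but unlike in the simple mechanism, link~$1$ never becomes unusable, so MN does not pay an asymptotic penalty for large rates.

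The first step is to describe the User Equilibrium $\hat f(r)$ through the four regimes induced by the plateau: (i) $0\le r\le r_1$: all flow on link~$1$, MN coincides with Opt; (ii) $r_1\le r\le r_1+\rho(u_1-1)$: $\hat f_1$ is \emph{frozen} at $r_1$ and $\hat f_2=r-r_1$ absorbs the excess (using~(\ref{user equilibrium}), any strictly larger $\hat f_1$ would jump link~$1$'s latency to $u_1>1$ and make link~$2$ strictly preferable, while link~$2$ users do not want to switch as long as their latency stays below $u_1$); (iii) $r_1+\rho(u_1-1)\le r\le u_1+\rho(u_1-1)$: link~$2$'s latency has reached the plateau level $u_1$, so $\hat f_2$ is frozen at $\rho(u_1-1)$ and $\hat f_1$ walks across the plateau from $r_1$ to $u_1$, both latencies equal $u_1$; (iv) $r\ge u_1+\rho(u_1-1)$: $\hat f$ coincides with the unmodified Nash flow. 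Theorem~\ref{thm: basic facts} then yields closed-form expressions for $\Copt(r)$ and for $\hat C(\hat f(r))$ in each regime.

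The second step is to maximize $\ePoA(r)=\hat C(\hat f(r))/\Copt(r)$ over $r$. Within each regime the ratio is a rational function whose numerator and denominator are at most quadratic in $r$, so its maximum is attained either at a regime boundary or at the unique interior critical point, which is found by elementary single-variable calculus. Regime~(i) gives ratio~$1$; regime~(iii) gives a ratio readily bounded because both latencies equal $u_1$; regime~(iv) asymptotes to $1$ from above. The peaks that drive the analysis are the one at the right endpoint of regime~(ii) (which plays the role of the $r=r_2$ peak in the classical PoA analysis) and the peak of regime~(iv) near its left endpoint (which pays the ``plateau penalty'' $u_1/1$).

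The third step is the min--max balancing. I would choose $r_1(\rho)$ and $u_1(\rho)$ so that the two dominant peaks coincide; this is two equations in two unknowns and has an explicit closed-form solution. Substituting back yields $\ePoA$ as an explicit algebraic function of $\rho$, which I then maximize over $\rho\ge 1$. A direct symbolic/numerical evaluation shows that $\sup_\rho \ePoA(\rho)<1.192$. The main technical obstacle is the regime~(ii) analysis together with the simultaneous tuning of $r_1$ and $u_1$: one must verify that the discontinuous-latency equilibrium is well defined in the sense of~(\ref{user equilibrium}), and that the balancing equations admit a feasible solution satisfying $r_1\le 1\le u_1$ for every $\rho>1$. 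The matching lower bound of $1.191$ proved in Section~\ref{lower bound} confirms that this choice of plateau parameters is essentially optimal within the class of deterministic non-decreasing coordination mechanisms.
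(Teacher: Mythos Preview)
Your plan follows the paper's route exactly: the same plateau mechanism on link~1, the same four-regime description of the modified equilibrium, and the same idea of balancing two peaks by choosing $r_1$ and $u_1$. So the strategy is correct. But three concrete points in your sketch would make the computation go wrong as written.

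First, the normalization. Affine scaling gives you two degrees of freedom (scale latencies, scale the rate), but the instance has three essential parameters, so you cannot force $b_1=0$ by scaling. The paper keeps $b_1$ free and then observes that both peak ratios are monotone decreasing in $b_1$, so the worst case is $b_1=0$; only after that reduction does the one-parameter picture in $\rho=R=a_1/a_2$ emerge. Relatedly, the threshold below which no modification is needed is not $\rho\le 1$ but $\rho\le R_0=96/53$, and there you leave the latencies \emph{unchanged} (the simple mechanism of Section~\ref{sec: two links} does not itself achieve $1.192$).

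Second, regime~(i) is not ``MN coincides with Opt'': Opt starts using link~2 at $r=r_2/2=1/2$, whereas MN stays on link~1 up to $r_1\in[1/2,1]$. Hence $\ePoA(r)>1$ on $(1/2,r_1]$, and the first dominant peak is at $r=r_1$ with value $r_1^2/\Copt(r_1)$. This is the peak that plays the role of the classical $r=r_2$ peak, and it sits at the \emph{left} endpoint of your regime~(ii), not the right one.

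Third, the second dominant peak is not near the left endpoint of regime~(iv) but at $r^*=r_1+\rho(u_1-1)$ approached from above (the left endpoint of regime~(iii)): at $r^*_+$, $\hat\ell_1$ jumps to the plateau value $u_1$, so $\CMN$ jumps to $u_1\,r^*$ and $\ePoA(r^*_+)=u_1 r^*/\Copt(r^*)$. On all of regimes~(iii) and~(iv) the ratio $u_1 r/\Copt(r)$ (resp.\ $\CN(r)/\Copt(r)$) is strictly decreasing, so the value at $r^{**}$ is already smaller. If you balance the two peaks you named, you do not control $\ePoA(r_1)$ or $\ePoA(r^*_+)$, and the claimed bound need not hold. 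With the corrected peaks, your ``min--max balancing'' step recovers exactly the paper's equation~(\ref{eq: upper bound}) and the subsequent choice $\alpha_0,\beta_0$.
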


\begin{proof} 
Recall (Theorem~\ref{thm:two-links-simple}) that the Price of Anarchy is upper bounded by $(4 + 4R)/(4 +
3R)$ where $R = a_1/a_2$. Let $R_0$ be such that $(4 + 4R_0)/(4 + 3R_0) =
1.192$. Then $R_0 = 96/53$. We only need to consider the case $R >
R_0$. 
The latency function of the second link is unchanged and the latency
function of the first link is changed into 
\begin{equation}
\widehat{\ell}_1(x)= \left\{ 
		\begin{array} {lll}
		\ell_1(x), &		          x\leq x_1	\\ 
		\ell_1(x_2), &	               x_1 <x \le x_2	\\ 
		\ell_1(x) &               x> x_2.  
		\end{array}
	\right.
\label{eq:mechanism_2links}
\end{equation}
where $x_1$ and $x_2$ satisfy $r_2/2 \le x_1 \le r_2 \le x_2$ and will be fixed
later. In words, when either the flow in the first link does not exceed
$x_1$, or is larger than $x_2$, the network remains unchanged.  
However, when the flow in the first link is between these two
values, the mechanism increases the latency of this link to  $\ell_1(x_2)$.  
Let $r^*$ be such that 
\[         \ell_2(r^* - x_1) = \ell_1(x_2). \]
We will fix $x_1$ and $x_2$ such that $r^* \ge r_2$.

What is the effect of this modification? For $r \le r_2/2$, Opt and MN are
the same and $\ePoA(r) = 1$. For $r_2/2 \le r \le x_1$, MN behaves like Nash
and $\ePoA(r)$ increases. 
At $r = x_1$, MN starts to use the second link. 
MN will route any additional flow on the second link until $r = r^*$. 
At $r = r^*$, MN routes $x_1$ on the
first link and $r^* - x_1$ on the second link. Beyond $r^*$, MN routes additional flow on the
first link until the flow on the first link has grown to $x_2$. This is the
case at $r^{**} = r^* - x_1 + x_2$. For $r \ge r^{**}$, MN behaves like
Nash. 

\begin{figure}[t] 
\begin{center}
 \begin{tikzpicture}[auto, right,  node distance=0.5cm, xscale=2.0]
	\tikzstyle{axis}=[->,draw=black,line width=1pt,rounded corners]
	\tikzstyle{graph}=[draw=black,line width=1pt]
	\tikzstyle{graph-blue}=[dotted,draw=black,line width=1pt]
	

	\coordinate(origin) at (0,0) {};
	\coordinate(xmax) at (6.5cm,0) {};
	\coordinate(ymax) at (0,5cm) {};

	\draw[axis] (origin.center) -- (xmax); 
	\draw[axis] (origin.center) -- (ymax);

	\node[anchor=east] (ylabel) at ([xshift=-1ex, yshift=-2ex]ymax.east) {$\npoa(r)$};
	\node[anchor=north] (xlabel) at ([xshift=0ex, yshift=-1ex]xmax.south) {$r$};

	\coordinate(ropt) at (1.3,1) {};

	\draw[graph] (0,1) coordinate (graphbegin)  --  (ropt);  
	\draw[graph-blue] (1.3,0) coordinate (xax1) --(ropt);	

	\node[anchor=east]  at ([xshift=-1ex]graphbegin.east) {1};

	\draw[graph] (ropt)  
			.. controls (1.02,1.0) and (1.7,1.0)..
			(2.1cm,1.7) coordinate (P1) 
			.. controls (2.2,1.9) and (2.31,2.39)..
			(2.32cm,2.4) coordinate (P2) ;

	\draw[graph] (P1) .. controls (2.2,1.9) and (2.31,2.39).. (P2)
			.. controls (2.5,1.7) and (2.7,0.4)..
			(3.1,1.5) coordinate (Q2);
	\draw[graph] ([yshift=1.1cm]Q2) coordinate (Q2top)
			.. controls (3,2.6cm) and (3.9cm,1.0cm)..
			(6,1.0cm) coordinate (limit);

	\node[anchor=center]  at ([yshift=-1.5ex]xax1.south) {${r_2/2}$};
	\draw[graph-blue] (2.32cm,0) coordinate (x1) --(P2);	
	\node[anchor=center] (x1)  at ([yshift=-1.5ex]x1.south) {$x_1$};
	\draw[graph-blue] (3.1cm,0) coordinate (x2) --(Q2top);	
	\node[anchor=center] (rstar)  at ([yshift=-1.5ex]x2.south) {$r^*$};
 	\node (rnash) at (2.6cm,0) {};
	\node[anchor=center] at ([yshift=-1.5ex]rnash) {$r_2$};

	\draw[graph-blue] (3.7cm,0) coordinate (x3) --(3.7cm,1.9cm);	
	\node[anchor=center] (rstar)  at ([yshift=-1.5ex]x3.south) {$r^{**}$};

	\end{tikzpicture}
\end{center}
	\caption{The engineered price of anarchy for the construction of
	Section~\ref{sec: two links advanced}.\label{fig: ePoA for 2 links}}
\end{figure}
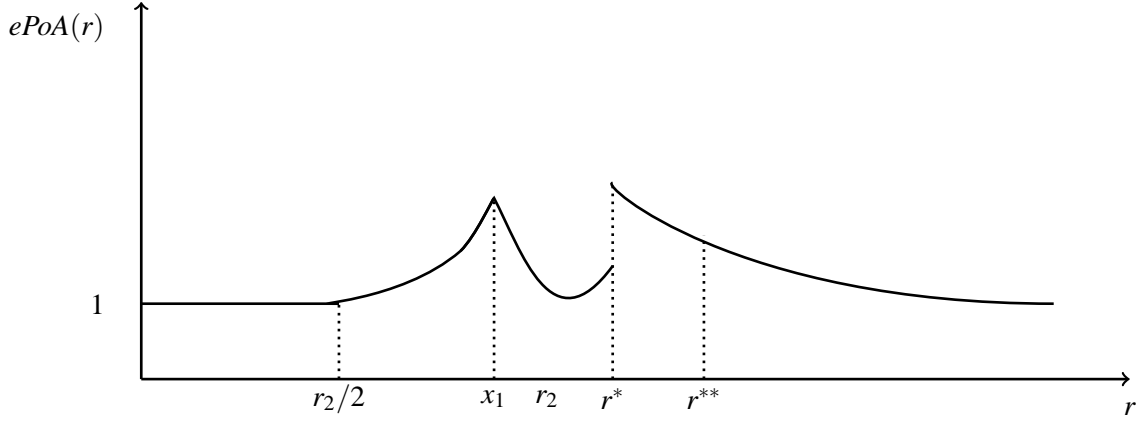

Figure~\ref{fig: ePoA for 2 links} shows the graph of $\ePoA(r)$. We have
$\ePoA(r) = 1$ for $r \le r_2/2$. For $r_2/2 \le r \le x_1$, $\ePoA(r)$ increases
to 
\[              \ePoA(x_1) = \frac{a_1 x_1^2 + b_1 x_1}{\Copt(x_1)}.\]
For $x_1 \le r \le r^*$, $\ePoA(r)$ is convex. It will first decrease and reach
the value one (this assumes that $r^*$ is big enough) at the rate where Opt
routes $x_1$ on the first link; after this rate it will increase again. 
At $r^*$, $\ePoA$ has a
discontinuity because at $r^*$ MN routes $x_1$ on the first link for a cost of
$\ell_1(x_1) x_1$ and at $r^*+\epsilon$ it routes $x_1+\epsilon$ on the first link for a cost
of $\ell_1(x_2) (x_1+\epsilon)$. Thus 
\[        \lim_{r \rightarrow r^*_+} \frac{\CMN(r)}{\Copt(r)} = 
\lim_{r \rightarrow r^*_+}\frac{\ell_1(x_2)
r}{\Copt(r)} = \frac{\ell_1(x_2) r^*}{\Copt(r^*)} = \frac{\ell_2(r^* - x_1) r^*}{\Copt(r^*)}.\]
For $r \ge r^*$, $\ePoA(r)$ decreases. Thus 
\begin{equation}
\ePoA = \max\left\{\frac{a_1 x_1^2 + b_1 x_1}{\Copt(x_1)}, \frac{\ell_2(r^* - x_1)
r^*}{\Copt(r^*)}\right\}. \label{eq: upper bound} \end{equation}
It remains to show that $x_1\le r_2$ and $r^*\ge r_2$ can be
chosen\footnote{The optimal choice for $x_1$ and $r^*$ is such that both terms
are equal and as small as possible. 
We were unable to solve the resulting system explicitly. We will prove in the
next section that the
mechanism defined by these optimal choices of the parameters $x_1$ and $r^*$ is
optimal.} such that the right-hand
side is at most 1.192. By Theorem~\ref{thm: basic facts}~(c), 
\[     \Copt(r) = b_1r + \frac{a_1}{1 + R}\left(r^2 + R r_2 r - R r_2^2/4\right), \]
for $r \ge r_2/2$ and $R = a_1/a_2$. Also $\ell_2(r^* - x_1) = a_2(r^* - x_1) +
b_2 = a_2(r^* - x_1) + b_1 + a_1 r_2$. 

We first determine the maximum $x_1 \le r_2$ such that $\left(a_1 x_1^2 + b_1
x_1\right)/\Copt(x_1) \le 1.192$ for all $b_1$. Since the expression $\left(a_1 x_1^2 + b_1 x_1\right)/\Copt(x_1)$
is decreasing in $b_1$, this $x_1$ is determined for $b_1 = 0$. It follows that
$\alpha = x_1/r_2$ is defined by the equation
\begin{equation} \frac {4 (R +1) \alpha ^2} {4 \alpha  R - R +4 \alpha^2} = 1.192.
\label{eq: ePoA1}
	\end{equation}
For $R \ge R_0$, this equation has a unique solution $\alpha_0 \in [1/2,1]$,
namely $$\alpha_0= \frac 1 2\cdot \frac {149 R+2 \sqrt{894 R(R+1)}} {125
R-24}.$$ We turn to the second term in equation (\ref{eq: upper bound}). For
$r^* > r_2$, it is a decreasing function of $b_1$. Substituting $b_1 = 0$ into the second
term and setting $\beta = r^*/r_2$ yields after some computation
\begin{equation}
\label{eq:p2_d0}
    \ePoA_2 = \frac {4\beta (R+1) (\beta-\alpha+R) } {R(4 \beta^2+4 \beta R-R)}.
\end{equation}

\noindent For fixed $\alpha = \alpha_0$ and any $R \ge R_0$, $\ePoA_2$ is
minimized for $\beta = \beta_0 = \frac {R+\sqrt R\sqrt{R +4 \alpha_0
    (R -\alpha_0) }}{ 4\alpha_0}.$
For $R \ge R_0$, one can prove $\beta_0\ge 1$, as
needed. Substituting $\alpha_0$ and $\beta_0$ into $\ePoA_2$ yields a function
of $R$. It is easy to see, using the derivative, that the maximum value of this function
for  $R \ge R_0$ is at most $1.192$. 
 \end{proof}

\section{A Lower Bound for the Case of Two Links}\label{lower bound}

We prove that the construction of the previous section is optimal
among the class of deterministic mechanisms that guarantee the existence of an equilibrium for every rate $r>0$ and that use
non-decreasing\footnote{It remains open whether similar arguments can be
  applied for showing the lower bound for non-monotone mechanisms with respect to User Equilibria.} latency functions.
For these mechanisms we show that  $\ePoA \ge 1.191$. 

As in the preceding sections, we use $\hl$ to denote the modified latency
functions. The use of $R$ throughout this section denotes the ratio of
the linear coefficients of the two latency functions of the instance, and should not be confused with its use in the
previous sections, were it was a parameter of the mechanism. As mentioned above, we are making two assumptions about the $\hl$'s: an equilibrium
flow must exist for every rate $r$, and $\hl_i$  is non-decreasing, i.e., 
if $x<x'$, then $\hl_i(x) \le \hl_i(x')$, for $i=1,2$. 
It is worthwhile to recall the equilibrium 
conditions for general latency functions (as given by Dafermos-Sparrow~\cite{DS69}): 
if $(x,y)$ is an equilibrium for
rate $r = x + y$, then
$\hl_2(y') \ge \hl_1(x)$ for $y' \in (y,r]$ (otherwise $y' - y$ amount of flow
would move from the first link to the second) and $\hl_1(x') \ge \hl_2(y)$
for $x' \in (x,r]$ (otherwise, $x' - x$ amount of flow would move from the
second link to the first). Since we assume our functions to be monotone, the
condition $\hl_2(y') \ge \hl_1(x)$ for $y' \in (y,r]$ is equivalent to 
$\lim \inf_{y' \downarrow y} \hl_2(y') \ge \hl_1(x)$ provided that $y < r$ (or
equivalently, $x > 0$). Since we are discussing a network of two parallel links,
the latter condition is in turn equivalent to (\ref{user equilibrium}). 

\begin{theorem}\label{thm: lower bound for 2links} The construction of
Section~\ref{sec: two links advanced} is optimal and $\ePoA \ge 1.191$. 
\end{theorem}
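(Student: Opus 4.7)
The plan is to show that for any deterministic mechanism with non-decreasing modified latencies $(\hl_1, \hl_2)$, one can choose an instance on two parallel links and a rate $r$ such that $\ePoA(r)\ge 1.191$. I will work with the family $\ell_1(x)=R x$, $\ell_2(x)=x+R$, normalized so that $r_2=1$, and let the adversary pick $R>R_0\assign 96/53$ after seeing the mechanism. Combined with the upper bound $1.192$ of Theorem~\ref{thm:two-links-advanced}, this pins $\ePoA$ into $[1.191, 1.192]$ and shows the construction of Section~\ref{sec: two links advanced} is optimal.

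First I would argue, from monotonicity of $\hl_1,\hl_2$ and the User Equilibrium condition (\ref{user equilibrium}), that the MN flow on link~1, denoted $x_1^*(r)$, can be chosen non-decreasing in $r$. Then I introduce two critical parameters extracted from the mechanism:
\[
x_1 \;\assign\; \sup\{r\ge 0 : x_1^*(r) = r\}, \qquad
r^* \;\assign\; \sup\{r\ge 0 : x_1^*(r)\le x_1\}.
\]
Thus $x_1$ is the largest rate at which MN routes all flow on link~1, and $r^*\ge x_1$ is the largest rate at which MN's flow on link~1 is still capped at $x_1$. These match the parameters $x_1, r^*$ of the upper-bound construction; the goal is to show that every mechanism is essentially forced to conform to this template.

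Next I would derive two lower bounds on $\ePoA$, called (A) and (B). Bound (A) evaluates at $r = x_1$: since MN routes all $x_1$ units on link~1 with latency $\hl_1(x_1)\ge \ell_1(x_1)=R x_1$, we obtain $\ePoA \ge R x_1^2/\Copt(x_1)$. Bound (B) evaluates the limit as $r\downarrow r^*$: for $r>r^*$ the flow on link~1 exceeds $x_1$, so by monotonicity of $\hl_1$ and the equilibrium condition, the common latency on the two links at rate $r$ is at least $\hl_2(r^*-x_1)\ge \ell_2(r^*-x_1)=r^*-x_1+R$. Multiplying by the total rate yields
\[
\ePoA \;\ge\; \lim_{r\downarrow r^*}\frac{\CMN(r)}{\Copt(r)} \;\ge\; \frac{(r^*-x_1+R)\,r^*}{\Copt(r^*)}.
\]
I would then conclude by case analysis. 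If $x_1\ge 1$, bound (A) at $x_1=1$ evaluates to $4(R+1)/(3R+4)$, which exceeds $1.191$ for all $R>R_0$. Otherwise $x_1<1$, and setting $\alpha=x_1$ and $\beta=r^*$ and plugging in the formula of Theorem~\ref{thm: basic facts}(c) for $\Copt$, bounds (A) and (B) become precisely the two expressions whose minimum (over $\alpha,\beta$) was computed in the proof of Theorem~\ref{thm:two-links-advanced}. Maximizing over $R>R_0$, one verifies that $\min_{\alpha,\beta}\max(A,B)\ge 1.191$, matching the upper bound. The subcase $r^*<1$ is absorbed because $\Copt$ is smaller there and bound (B) strengthens accordingly.

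The main obstacle is the rigorous proof of bound (B) in the presence of discontinuities: one must show that the User Equilibrium condition forces, for $r$ slightly above $r^*$, the latency on link~2 to approach $\hl_2(r^*-x_1)$ from above and the flow on link~1 to exceed $x_1$; together, these force $\CMN(r)\to \hl_2(r^*-x_1)\cdot r^*$ as $r\downarrow r^*$. This calls for careful handling of the $\liminf$ in (\ref{user equilibrium}) and of possible simultaneous jumps in $\hl_1$ and $\hl_2$. A secondary difficulty is the algebraic verification of the numerical constant $1.191$, which essentially mirrors (with reversed optimization direction) the computation already done in the upper bound proof.
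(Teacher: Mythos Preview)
Your overall plan is right and matches the paper's: extract from any mechanism two parameters $x_1$ and $r^*$, derive the two lower bounds you call (A) and (B), and observe that the resulting $\min_{\alpha,\beta}\max(A,B)$ is exactly the expression appearing in the upper bound (\ref{eq: upper bound}), so that the optimal value is pinned between $1.191$ and $1.192$. However, your route to (B) has a genuine gap, and your definitions of $x_1,r^*$ create a second one.

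First, the monotone-selection step is not innocent. You assert that the equilibrium flow $x_1^*(r)$ on link~1 can be chosen non-decreasing in $r$; with discontinuous latencies and possibly multiple User Equilibria this needs a proof you do not supply, and the rest of your definitions hinge on it. The paper sidesteps this entirely: it fixes \emph{one} equilibrium $(x_1,1-x_1)$ at rate $1$ and defines $r^*$ as the infimum of rates admitting \emph{any} equilibrium with first coordinate exceeding $x_1$. This requires no selection and immediately yields inequality~(\ref{second}).

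Second, and more seriously, your argument for bound~(B) does not go through as written. You claim that for $r$ just above $r^*$, the ``common latency'' is at least $\ell_2(r^*-x_1)$. But if $(x,y)$ is such an equilibrium with $x>x_1$, one gets $\hl_1(x)\ge \ell_2(r^*-x_1)$ from~(\ref{second}); the problem is $\hl_2(y)$. Since $y=r-x<r-x_1\approx r^*-x_1$, monotonicity of $\hl_2$ gives only an \emph{upper} bound $\hl_2(y)\le \hl_2(r^*-x_1)$, and the User Equilibrium condition on $(x,y)$ yields $\liminf_{y'\downarrow y}\hl_2(y')\ge \hl_1(x)$, which controls $\hl_2$ only to the right of $y$, not at $y$ itself. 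A jump of $\hl_2$ at $y$ would defeat your inequality. The paper closes this gap by a case analysis (Lemmas~\ref{lem: large x} and~\ref{lem: small x}): either $\calF_\epsilon$ contains two equilibria with distinct second coordinates (then the equilibrium condition of the one with smaller $y$ bounds $\hl_2$ at the larger $y$), or it contains a pair with $y=r^*-x_1$ (then $\hl_2(y)\ge\ell_2(r^*-x_1)$ directly), or all pairs share a single $y_0<r^*-x_1$ --- in which case the paper manufactures a contradiction by looking at the equilibrium at the intermediate rate $(r^*+x_1+y_0)/2$. This cross-equilibrium argument is the heart of the lower bound and is absent from your sketch.

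A minor point: the adversary does not pick $R$ after seeing the mechanism. One fixes $R\approx 2.1$ first, then lets the mechanism choose $(\hl_1,\hl_2)$ for that instance, then chooses the rate.
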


\begin{proof}
We analyze a network with 
latency functions $\ell_1(x) = x$ and $\ell_2(x) = x/R +1 = (x + R)/R$, $2 \le R \le
4$, and
derive a lower bound as a function of the parameter $R$; the restriction $2 \le R \le 4$ will
become clear below. In a second step we
choose $R$ so as to maximize the lower bound; the optimal choice is $R = R^*
\approx 2.1$. For $r \le 1/2$, Opt uses one link and $\Copt(r) = r^2$, and for
$r \ge 1/2$, Opt uses two links and 
$\Copt(r) = (r^2 + Rr - R/4)/(1 + R)$; $\Copt(1) = (3R + 4)/(4R + 4)$ and
$\CN(1) = 1$. Thus $\PoA = \PoA(1) = (4 R + 4)/(3 R + 4)$. For $R \ge 2$, we
have $\PoA(1) \ge 12/10 = 1.2$.

Consider now some modified latency functions $\hat \ell_1, \hat \ell_2$, and 
let $(x_1,1-x_1)$ be an equilibrium flow for rate $1$ for the modified network.
Let
\[ r^* = \inf \{ r \; ;\; \text{there is an equilibrium flow $(x,r - x)$ for MN with
$x > x_1$} \} ;\]
$r^* = \infty$ if there is no equilibrium flow $(x,y)$ with $x > x_1$. The
equilibrium conditions for flow $(x_1,1-x_1)$ imply 
\begin{equation}\label{first}
\hl_1(x') \ge \hl_2(1 - x_1) \ge \ell_2(1 - x_1) \ge 1 \text{ for } x_1 < x' \le 1. 
\end{equation}

The above definition of $r^*$ is a core element of our
  proof. In Lemma~\ref{lem: domain_of_r*} we restrict the domain of
  $r^*$, as well as the range of the modified latencies for efficient mechanisms
  (those ones with low $\ePoA$), ending up with the lower bound
  provided in (\ref{lb1}). Then in Lemmas~\ref{lem: large x},~\ref{lem: small
    x} we focus on the properties that the equilibria of efficient
  mechanisms should satisfy. In Lemma~\ref{lem: large x}, we bound from
  above the amount of equilibrium flow that uses the first
  link if the mechanism is efficient, while in Lemma~\ref{lem: small
    x} we obtain a second lower bound on the $\ePoA$. Finally, in
  Lemma~\ref{lem: summary} we summarize
  the above properties ending up with the lower bound of~(\ref{lb2}).

\begin{lemma}\label{lem: domain_of_r*} If $\hl_1(x_1) \ge 1$ or $r^* = \infty$ or $r^* \le 1$, $\ePoA \ge 1.2$.
\end{lemma}
\begin{proof} If $\hl_1(x_1) \ge 1$ , we have $\CNM(1)
\ge 1$ and hence $\ePoA \ge \PoA(1) \ge 1.2$. 

If $r^* = \infty$, $\ePoA(\infty) \ge 1 + 1/R$. For $R \le 4$, this
is at least $1.25$. 

If $r^* \leq 1$, there is an equilibrium flow $(x,y)$ with $x > x_1$ and $r = x
+ y \leq 1$. Then $\hl_1(x) \ge 1$ by inequality (\ref{first}). Also $\hl_2(y)
\ge 1$. Thus 
$\CNM(r) \ge 1 \ge r$ and hence $\ePoA(r) \ge r/\Copt(r)$. For $r \le 1$, we have 
\[ \frac{r}{\Copt(r)} = \frac{r(1 + R)}{r^2 + Rr - R/4} \ge 1 + \frac{R/4}{r^2 +
Rr - R/4} \ge 1 + \frac{R/4}{1 + R - R/4} = \frac{4 + 4R}{4 + 3R} = \PoA(1) \ge 1.2.\]
 \end{proof}

In the light of the Lemma above, we proceed under the assumption $\hl_1(x_1) <
1$, and hence $x_1 < 1$, and $1 < r^* < \infty$. 
Then $(x_1,0)$ is an equilibrium flow, since $\hl_1(x_1) < 1 \le \hl_2(y')$
for $0 < y' \le x_1$. Thus 
\begin{equation}\label{lb1}
\ePoA(x_1) \ge \frac{x_1^2}{\Copt(x_1)}. 
\end{equation}
By definition of $r^*$, MN routes at most $x_1$ on the first
link for any rate $r < r^*$ and for any $\epsilon > 0$ 
there is an $r < r^* + \epsilon$ such that $(x,r - x)$ with $x > x_1$ is an
equilibrium flow for MN. 

For $r < r^*$, any equilibrium flow $(x,r-x)$ has $x \le x_1$. Thus, for
$x' \in (x,r] \supseteq (x_1,r]$, $\hl_1(x') \ge \hl_2(r-x) \ge \ell_2(r-x) \ge
\ell_2(r - x_1)$. Since this inequality holds for any $r < r^*$, we have 
\begin{equation}\label{second}
\hl_1(x') \ge \ell_2(r^* - x_1) \quad\text{for}\quad x' \in (x_1,r^*).
\end{equation}

\newcommand{\calF}{{F}}\newcommand{\calFe}{{\calF_\epsilon}}\newcommand{\calFez}{{\calF_{\epsilon_0}}}

For $\epsilon > 0$, let 
\[   \calF_\epsilon = \set{(x,y)}{\text{$(x,y)$ is an equilibrium flow with
$r^* \le x + y \le r^* + \epsilon$ and $x > x_1$}}.\]
Observe that $\calFe$ is non-empty by definition of $r^*$. 

\begin{lemma}\label{lem: large x}If for arbitrarily small $\epsilon > 0$, there is a $(x,y) \in \calFe$ with
$x \ge r^*$, then $\ePoA \ge \PoA(1) \ge 1.2$. 
\end{lemma}
\begin{proof} 
Let $r = x  + y$. Then $\ePoA \ge r^2 / \Copt(r)$. Since this
inequality holds for arbitrarily small $\epsilon$, $$\ePoA \ge
(r^*)^2/\Copt(r^*) \ge \PoA(1) = 1.2.$$ \end{proof}

We proceed under the assumption that there is an $\epsilon_0 > 0$ such that 
$\calF_{\epsilon_0}$ contains no pair $(x,y)$ with $x \ge r^*$. 

\begin{lemma}\label{lem: small x} 
If for arbitrarily small $\epsilon \in (0,\epsilon_0)$, $\calFe$ contains
either a pair $(x,r^*-x_1)$ or pairs $(x,y)$ and
$(u,v)$ with $y \not= v$, then $\ePoA \ge \ell_2(r^* - x_1)r^*/\Copt(r^*)$. \end{lemma}
\begin{proof} Assume first that $\calFe$ contains a pair $(x,r^*-x_1)$ and let
$r = x + r^* - x_1$. Then 
\[ \CNM(r) = \hl_1(x) x + \hl_2 (r^* - x_1) (r^* - x_1) \ge
\ell_2(r^* - x_1) r\]
since $\hl_1(x) \ge \ell_2(r^* - x_1)$ by (\ref{second}). 

Assume next that $\calFe$ contains pairs $(x,y)$ and $(u,v)$ with $y \not=
v$. Then $\hl_1(x) \ge  \ell_2(r^* - x_1)$ and $\hl_1(u) \ge  \ell_2(r^* -
x_1)$ by (\ref{second}). We may assume, $y > v$. Let $r = x+ y$. 
Since $(u,v)$ is an equilibrium $\hl_2(y') \ge
\hl_1(u)$ for $y' \in (v,u+v)$ and hence $\hl_2(y) \ge \hl_1(u)$ . Thus 
\[ \CNM(r) = \hl_1(x) x + \hl_2 (y) y  \ge
\ell_2(r^* - x_1) r.\]

We have now shown that $\CNM(r) \ge \ell_2(r^* - x_1) r$ for $r$'s greater than
$r^*$ and arbitrarily close to $r^*$. Thus $\ePoA \ge \ell_2(r^* - x_1)r^*/\Copt(r^*)$. 
 \end{proof}

We proceed under the assumption that there is an $\epsilon_0 > 0$ such that 
$\calF_{\epsilon_0}$ contains no pair $(x,y)$ with $x \ge r^*$, no pair $(x,r^*
- x_1)$ and no two pairs with distinct second coordinate. In other words, there
is a $y_0 < r^* - x_1$ such that all pairs in $\calF_{\epsilon_0}$ have second
coordinate equal to $y_0$. \bigskip


Let $(x_0,y_0) \in \calF_{\epsilon_0}$. Then $y_0 < r^* - x_1$. Let $(x,y)$ be an equilibrium for
rate $r = (r^* + x_1 + y_0)/2$. Then $r = (2r^* + y_0 - (r^* - x_1))/2 < r^*$
and hence $x \le x_1$. Thus $y = r - x \ge r - x_1 = (r^* - x_1 + y_0)/2 >
y_0$ and $r - y_0 > x_1$. Consider the pair $(r - y_0,y_0)$. Its rate is less than $r^*$ and its
flow across the first link is $r - y_0$ which is larger than $x_1$. Thus it is
not an equilibrium by the definition of $r^*$. Therefore there is either an 
$x'' \in (r - y_0,r]$ with $\hl_1(x'') < \hl_2(y_0)$ or a $y'' \in (y_0,r]$ with
$\hl_2(y'') < \hl_1(r - y_0)$. We now distinguish cases. 

Assume the former. Since $(x,y)$ is an equilibrium, we have $\hl_1(x') \ge
\hl_2(y)$ for all $x' \in (x,x+y]$ and in particular for $x''$; observe that 
$r - y_0 \ge x$ since $r - x = y > y_0$. Thus 
$\hl_2(y_0) > \hl_2(y)$, a contradiction to the monotonicity of $\hl_2$. 

Assume the latter. Since $(x_0,y_0)$ is an equilibrium, we have $\hl_2(y') \ge
\hl_1(x_0)$ for all $y' \in (y_0,x_0 + y_0]$ and in particular for $y''$. Thus 
$\hl_1(r - y_0) > \hl_1(x_0)$, a contradiction to the monotonicity of $\hl_1$;
observe that $r - y_0 < x_0$ since $r < r^* \le x_0 + y_0$. 

\begin{lemma}\label{lem: summary}
\begin{equation}\label{lb2}
\ePoA \ge \min \left\{ 1.2, \min_{x_1 \le 1} \max\left\{ \frac{x_1^2}{\Copt(x_1)}, \min_{r^* \ge
1}
	\frac{\ell_2(r^* - x_1) r^*}{\Copt(r^*)} 
	\right\} \right\} .\end{equation}
\end{lemma}
\begin{proof} If $x_1 \ge 1$ or $r^* \le 1$ or $r^* = \infty$, we have $\ePoA
\ge 1.2$. So assume $x_1 < 1$ and $1 < r^* < \infty$. The argument preceding
this Lemma shows that the hypothesis of either Lemma~\ref{lem: large x}
or~\ref{lem: small x} is satisfied. In the former case, $\ePoA \ge 1.2$. In the
latter case, $\ePoA \ge \max \left\{\frac{x_1^2}{\Copt(x_1)}, \frac{\ell_2(r^* -
x_1) r^*}{\Copt(r^*)}\right\}$. This completes the proof.  \end{proof}

It remains to bound 
\begin{equation}\label{lb3}
\min_{x_1 \le 1} \max\left\{ \frac{x_1^2}{\Copt(x_1)}, \min_{r^* \ge
1} \frac{\ell_2(r^* - x_1) r^*}{\Copt(r^*)} \right\} = \min_{x_1 \le 1} \max
\left\{ \frac{x_1^2}{\Copt(x_1)}, \min_{r^* \ge
1} \frac{4 r^* (R+1) (r^* - x_1 + R)}{R(4(r^*)^2 + 4Rr^* - R)}\right\}\end{equation}
from below. We prove a lower bound of $1.191$. 
The term $x_1^2/\Copt(x_1)$ is increasing in $x_1$. Thus there is a unique
value $\alpha_1 \in [1/2,1]$ such that the first term is larger than $1.191$
for $x_1 > 
\alpha_1$. If the minimizing $x_1$ is larger than $\alpha_1$ we have
established the bound.

The second term is minimized for $r^* = \max\left\{1,\left(R + \sqrt{R^2 + 4 R^2 x_1 -
4 R x_1^2}\right)/(4x_1)\right\}$.  
Since $x_1 \le 1$ and hence $x_1^2 \le x_1$, we have
$(R + \sqrt{R^2 + 4 R^2 x_1 -
4 R x_1^2})/(4x_1)) \ge 2R/4 \ge 1$ and hence $$r^* = \left(R + \sqrt{R^2 + 4 R^2 x_1 -
4 R x_1^2}\right)/(4x_1).$$ The second term is decreasing in $x_1$ and hence we
may substitute $x_1$ by $\alpha_1$ for the purpose of establishing a lower
bound. We now specialize $R$ to $21/10$. For this
value of $R$ and $x_1 = \alpha_1$ 
\[ \frac{\ell_2(r^* - \alpha_1)
r^* }{\Copt(r^*)} |_{r^* = (R + \sqrt{R^2 + 4 R^2 \alpha_1 -
4 R \alpha_1^2})/(4\alpha_1) \text{ and } R = 21/10} \ge 1.191.\]
This completes the proof of the lower bound.\medskip

We next argue that the construction of Section~\ref{sec: two
links advanced} is optimal. Equations (\ref{eq: upper bound}) of Section~\ref{sec: two
links advanced} for $b_1 = 0$ and Equation (\ref{lb3}) agree. Hence our refined solution is optimal.
 \end{proof}

\section{Open Problems}
\label{sec:conclusion}

Clearly the ultimate goal is to design coordination mechanisms that work for general
networks. 
In the case of parallel links that we studied, we showed that our mechanism 
approaches $4/3$, as the number of links $k$ grows. 
It is still an open problem to show a bound of the form $4/5 -\alpha$, for some
strictly positive $\alpha$. 
A possible approach could be to use the ideas of 
Section~\ref{sec: two links advanced}. Another approach would be to define the
benign case more restrictively. Assuming $R_i = 8$ for all $i$, we would call
the following latencies benign: 
$\ell_1(x) = x$, and $\ell_i(x) = 1 + \epsilon\cdot i  + x/8^i$ for $i > 1$ and
small positive $\epsilon$. However, Opt starts using
the $k$-th link shortly after $1/2$ and hence uses an extremely efficient link
for small rates.  

Also, our results hold only for affine original latency functions.What can be
said for the case of more general latencies, for instance polynomials?
On the more technical side, it would be interesting to study whether our lower bound
construction of Section~\ref{lower bound} can be extended to modified latency 
functions $\hl$ that do not need to satisfy monotonicity.

\paragraph{Acknowledgements}
We would like to thank Elias Koutsoupias, Spyros Angelopoulos and Nicol\'as Stier Moses
for many fruitful discussions.


\begin{thebibliography}{10}

\bibitem{ABP06}
E.~Angel, E.~Bampis, and F.~Pascual.
\newblock Truthful algorithms for scheduling selfish tasks on parallel
  machines.
\newblock {\em Theor. Comput. Sci.}, 369(1-3):157--168, 2006.

\bibitem{AngelBPT09}
E.~Angel, E.~Bampis, F.~Pascual, and A.-A.~Tchetgnia.
\newblock On truthfulness and approximation for scheduling selfish tasks.
\newblock {\em J. Scheduling}, 12(5):437--445, 2009.

\bibitem{AJM07}
Y.~Azar, K.~Jain, and V.~S. Mirrokni.
\newblock (Almost) optimal coordination mechanisms for unrelated machine
  scheduling.
\newblock In {\em SODA}, pages 323--332,2008.

\bibitem{BS94}
D.~Bernstein and T.~E.~Smith.
\newblock Equilibria for networks with lower semicontinuous costs: With an
  application to congestion pricing.
\newblock {\em Transportation Science}, 28(3):221--235, 1994.

\bibitem{Bonifaci}
V.~Bonifaci, M.~Salek, and G.~Sch\"{a}fer.
\newblock On the efficiency of restricted tolls in network routing games.
\newblock In {\em SAGT}, pages 302--313, 2011.

\bibitem{Caragiannis09}
I.~Caragiannis.
\newblock Efficient coordination mechanisms for unrelated machine scheduling.
\newblock In {\em SODA}, pages 815--824, 2009.

\bibitem{CGP07}
G.~Christodoulou, L.~Gourv{\`e}s, and F.~Pascual.
\newblock Scheduling selfish tasks: about the performance of truthful
  algorithms.
\newblock In {\em COCOON}, pages 187--197, 2007.

\bibitem{CMP11}
G.~Christodoulou, K.~Mehlhorn, and E.~Pyrga.
\newblock Improving the Price of Anarchy for Selfish Routing via Coordination
Mechanisms.
\newblock In {\em ESA}, pages 119--130, 2011.


\bibitem{CKN09}
G.~Christodoulou, E.~Koutsoupias, and A.~Nanavati.
\newblock Coordination mechanisms.
\newblock {\em Theor. Comput. Sci.}, 410(36):3327--3336, 2009.

\bibitem{CCG+11}
R.~Cole, J.~R.~Correa, V.~Gkatzelis, V.~Mirrokni, and N.~Olver.
\newblock Inner product spaces for minsum coordination mechanisms.
\newblock In {\em STOC}, 2011.

\bibitem{CDR03}
R.~Cole, Y.~Dodis, and T.~Roughgarden.
\newblock Pricing network edges for heterogeneous selfish users.
\newblock In {\em STOC}, pages 521--530, 2003.

\bibitem{Cole-et-al}
R.~Cole, Y.~Dodis, and T.~Roughgarden.
\newblock How much can taxes help selfish routing?
\newblock {\em J. Comput. Syst. Sci.}, 72(3):444--467, 2006.

\bibitem{Correa-et-al}
J.~R.~Correa, A.~S.~Schulz, and N.~E.~Stier-Moses.
\newblock A geometric approach to the Price of Anarchy in nonatomic congestion
  games.
\newblock {\em Games and Economic Behavior}, 64:457--469, 2008.

\bibitem{DS69}
S.~C.~Dafermos and F.~T.~Sparrow.
\newblock The traffic assignment problem for a general network.
\newblock {\em Journal of Research of the National Bureau of Standards, Series
  B}, 73B(2):91--118, 1969.

\bibitem{Daf71}
S.~Dafermos.
\newblock An extended traffic assignment model with applications to two-way
  traffic.
\newblock {\em Transportation Science}, 5:366--389, 1971.

\bibitem{PN98}
A.~de~Palma and Y.~Nesterov.
\newblock Optimization formulations and static equilibrium in congested
  transportation networks.
\newblock In {\em CORE Discussion Paper 9861, Universit{\'e} Catholique de
  Louvain, Louvain-la-Neuve}, pages 12--17, 1998.

\bibitem{Fleischer05}
L.~Fleischer.
\newblock Linear tolls suffice: New bounds and algorithms for tolls in single
  source networks.
\newblock {\em Theor. Comput. Sci.}, 348(2-3):217--225, 2005.

\bibitem{FJM04}
L.~Fleischer, K.~Jain, and M.~Mahdian.
\newblock Tolls for heterogeneous selfish users in multicommodity networks and
  generalized congestion games.
\newblock In {\em FOCS},
  pages 277--285, 2004.

\bibitem{ILMS05}
N.~Immorlica, L.~Li, V.~S. Mirrokni, and A.~Schulz.
\newblock Coordination mechanisms for selfish scheduling.
\newblock In {\em WINE}, pages 55--69, 2005.

\bibitem{KK04a}
G.~Karakostas and S.~G. Kolliopoulos.
\newblock Edge pricing of multicommodity networks for heterogeneous selfish
  users.
\newblock In {\em FOCS},
  pages 268--276, 2004.

\bibitem{KK04b}
G.~Karakostas and S.~G.~Kolliopoulos.
\newblock The efficiency of optimal taxes.
\newblock In {\em CAAN},
  pages 3--12, 2004.

\bibitem{Kol08}
K.~Kollias.
\newblock Non-preemptive coordination mechanisms for identical machine
  scheduling games.
\newblock In {\em SIROCCO}, pages 197--208, 2008.

\bibitem{KP09}
E.~Koutsoupias and C.~Papadimitriou.
\newblock {Worst-case equilibria}.
\newblock {\em Computer Science Review}, 3(2):65--69, 2009.

\bibitem{MP07}
P.~Marcotte and M.~Patriksson.
\newblock Traffic equilibrium.
\newblock In {\em Transportation}, volume~14 of {\em Handbooks in Operations
  Research and Management Science}, chapter~10, pages 623--713. North-Holland,
  2007.

\bibitem{NRTV07}
N.~Nisan, T.~Roughgarden, E.~Tardos, and V.V. Vazirani.
\newblock {\em Algorithmic Game Theory}.
\newblock Cambridge University Press, 2007.

\bibitem{Pat94}
M.~Patriksson.
\newblock {\em The Traffic Assignment Problem: Models and Methods}.
\newblock V.S.P. Intl Science, 1994.

\bibitem{R01}
T.~Roughgarden.
\newblock Designing networks for selfish users is hard.
\newblock In {\em FOCS}, 2001.

\bibitem{Roughgarden-Tardos}
T.~Roughgarden and \'{E}. Tardos.
\newblock How bad is selfish routing?
\newblock {\em J.ACM}, 49:236--259, 2002.

\bibitem{War52}
J.~G. Wardrop.
\newblock Some theoretical aspects of road traffic research.
\newblock In {\em Proceedings of the Institute of Civil Engineers, Part II},
  volume~1, pages 325--378, 1952.

\end{thebibliography}

\end{document}